\algnewcommand\algorithmicforeach{\textbf{for each}}
\def\BibTeX{{\rm B\kern-.05em{\sc i\kern-.025em b}\kern-.08emT\kern-.1667em\lower.7ex\hbox{E}\kern-.125emX}}
\newtheorem{remark}{Remark}
\newtheorem{assumption}{Assumption}
\newtheorem{proposition}{Proposition}
\newtheorem{lemma}{Lemma}
\newtheorem{proof}{Proof}
\newtheorem{definition}{Definition}
\journal{ACM e-Energy 2020}
\begin{document}
\begin{frontmatter}
%
\title{Energy Storage as Public Asset}

%
\author{Jiasheng Zhang, Nan Gu, and Chenye Wu}

%

%
\begin{abstract}
  Energy storage has exhibited great potential in providing flexibility in power system to meet critical peak demand and thus reduce the overall generation cost, which in turn stabilizes the electricity prices. In this work, we exploit the opportunities for the independent system operator (ISO) to invest and manage storage as public asset, which could systematically provide benefits to the public. Assuming a quadratic generation cost structure, we apply parametric analysis to investigate the ISO's problem of economic dispatch, given variant quantities of storage investment.  This investment is beneficial to users on expectation. However, it may not necessarily benefit everyone. We adopt the notion of marginal system cost impact (MCI) to measure each user's welfare and show its relationship with the conventional locational marginal price. We find interesting convergent characteristics for MCI. Furthermore, we perform $k$-means clustering to classify users for effective user profiling and conduct numerical studies on both prototype and IEEE test systems to verify our theoretical conclusions. 
\end{abstract}

%
%

%
\begin{keyword}
Energy Storage, Optimization, Parametric Analysis, Locational Marginal Price, Power Networks, Electricity Market
\end{keyword}
\end{frontmatter}
%

%


\section{Introduction}
One of the key bottlenecks in improving the effectiveness of electricity sectors is the limited flexibility in the power system, which leads to the limited fluidity in the market. Fortunately, over the past few decades, technological improvements together with the scale of economy have significantly reduced the cost of various types of storage systems, and this trend is projected to continue in next years (as shown in Figure \ref{storage}). The storage system, if widely deployed, can provide the urgently needed flexibility to the power system, which will dramatically relieve the pressure in electricity market design. For example, it can relieve the critical peak in the system \cite{ugarte2015energy}, and mitigate too much uncertainties brought by the renewables \cite{denholm2011grid}. These and other services that storage system provides to the grid can benefit both the system operator (the system as a whole) as well as individual consumers. While most researches focus on incentivizing individual storage owners to provide services to the grid, we consider an alternative to view the storage as public asset. In essence, widely deployed storage system, just as most publicly-owned infrastructures in the grid, requires huge investment, yet it can generate comparable economic value with potential long-term returns.
\begin{figure}[t]
        \centering  
        \includegraphics[width=0.95\linewidth]{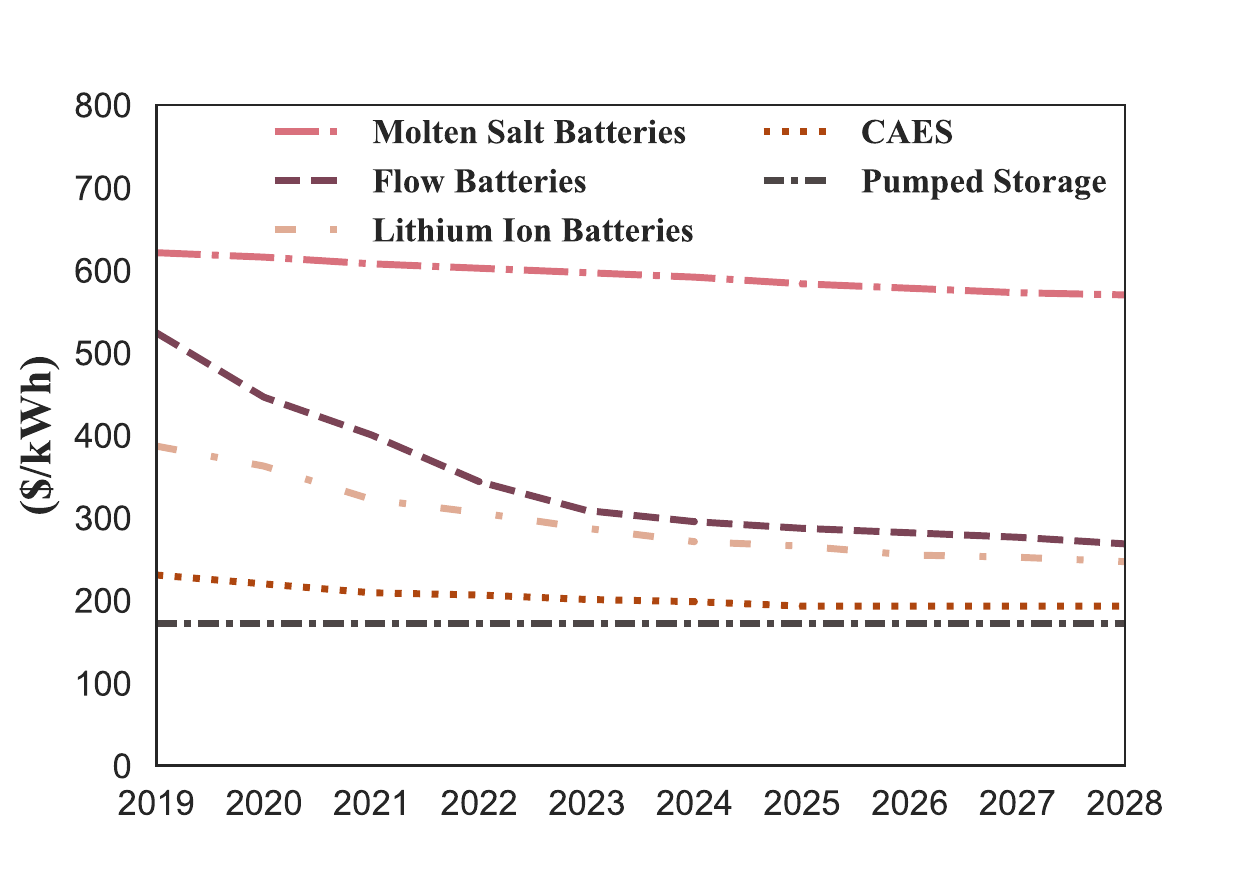}
        \caption{Projected Diminishing Marginal Costs for Variant Storage Technologies \cite{Navigant2019}.} 
        \label{storage}
        \vspace{-0.1cm}
\end{figure}
However, the large-scale deployment of storage could pose new challenges to the electricity market design. The major difficulty is exactly due to the large-scale deployment. In this case, storage systems can no longer be viewed as price-takers and will have a major impact on the current locational marginal price (LMP) scheme. At first glance, one may believe the storage system could help reduce the electricity bills for all users. This intuition is wrong. The truth is that the storage system could only help reduce the "average" electricity price over time and across  all the locations. This smoothing effect will of course benefit some market participants but make other participants worse-off. In this paper, we exploit how the integration of storage system will change the definition of conventional LMP, which serves as the basis for us to understand users in terms of their potential benefits. This also allows us to conduct $k$-means clustering to better distinguish heterogeneous users in the new market conditions. 


Moreover, we characterize the smoothing effect rigorously by examining the global convergence of the LMP scheme as storage capacity increases. We could in turn reason the dynamics of individual electricity bills as the total storage capacity in the grid increases. {We respectively highlight the impacts of publicly owned storage in two models: electricity pool model and network constrained model. The results of the former case can be applied in the micro-grid scenario and the latter emphasizes the effects of grid interchanges. }

\subsection{Related Works}
Our work roots in two research lines: the electricity storage control framework design and the pricing mechanism investigation in electricity market.

While storage control framework design has been well investigated, most researches either focus on individually owned storage control policy design (e.g., to conduct arbitrage against Time-of-Use (ToU) prices, or real time prices) or consider a central control framework in various electricity operation processes. For example, Tang \textit{et al.} discuss the dispatch game between independent system operator (ISO) and generator-owned storage in \cite{tang2015dynamic}. Bose \textit{et al.} show the variability and the locational marginal value of generator-owned energy storage in \cite{bose2014variability}. Mohsenian-Rad \textit{et al.} propose a framework to coordinate the investor-owned storage facilities in power system in  \cite{mohsenian2015coordinated}. Cui \textit{et al.} further the research by considering wind power integration in \cite{cui2017bilevel}. In \cite{lakshminarayana2016cooperation}, Lakshminarayana \textit{et al.} devise an operation schedule to centrally coordinate multiple storage devices. Qin \textit{et al.} design an algorithm to use storage to mitigate the uncertainties brought by renewables in \cite{qin2015distributed}. Grillo \textit{et al.} employ a Markov decision process to determine the optimal storage scheduling policy with time-varying renewable generation in \cite{grillo2015optimal}. Wang \textit{et al.} propose a dynamic programming algorithm for storage users' arbitrage scheme against multi-peaked ToU pricing in \cite{wang2019algorithmic}. Xu \textit{et al.} present an optimal look-ahead storage control policy for arbitrage based on Lagrangian multipliers in \cite{xu2019lagrangian}. Different from this line of research, we consider the storage system as public asset and examine both its value to the system operator and its benefit to individual market participants (through LMP analysis). Specifically, we use parametric analysis to exploit the value of storage. This technique has been utilized to understand the relationship between ramping capacities and overall generation cost in \cite{wu2015risk}. 
Parametric convex quadratic optimization is discussed in detail in \cite{kheirfam2010sensitivity}, \cite{Romanko04aninterior}. 


The pricing mechanism for the electricity sector has also caught much attention. Xu \textit{et al.} adopt VCG mechanism to design incentive compatible pricing scheme in \cite{xu2015efficient}. Kim \textit{et al.} in \cite{kim2015dynamic} propose a market scenario where both utility companies and customers employ reinforcement learning strategies to determine real-time price and schedule energy consumptions. Specifically, in the field of LMP, Oren \textit{et al.} clarify the definition of LMP 
and analyze the role of transmission rights on LMP in \cite{oren1995nodal}. Li \textit{et al.} address the step change issue of LMP when load variation occurs and raise a new continuous solution to this issue in \cite{li2007continuous}.
Bai \textit{et al.} redefine LMP in a market with various forms of distributed energy resources and decompose it into several components according to the physical attributes in \cite{bai2017distribution}. In contrast to the previous works, we identify that LMP, as its name suggests, is designed to exploit spatial features. However, storage system introduces temporal coupling into the pricing scheme, which warrants a re-consideration on the definition of LMP. Cui \textit{et al.} analyze the smoothing effect for LMP by storage in \cite{cui2017bilevel}, which is closely related to our topic. In contrast, we apply a data-driven approach to enable customized pricing schemes. This approach has been discussed in \cite{yu2017good}, where Yu \textit{et al.} classify user types to identify their economic information. This inspires our thought on measuring users' marginal impact when storage is deployed as a public asset. {Another distinct difference is that we focus on the operation of storage, so the investment cost of storage is not considered.}
\subsection{Our Contributions}
In seek of exploiting the value of storage system as public asset to the grid, the principal contributions of our work can be summarized as follows:
\begin{itemize}
  \item \textit{LMP Scheme with Storage}: We exploit the definition of LMP with storage system as public asset, and decompose it in terms of spatial components (conventional definition) and temporal components (new components induced by storage).
  \item \textit{Storage's Impact on System}: We prove storage helps to increase social welfare. Besides, we characterize the smoothing effect induced by LMP, both in the electricity pool model and in the general network constrained model. This highlights the value of storage as public asset to the system as a whole.
  \item \textit{Data-driven User Profiling}: The new definition of LMP can help us characterize the users with big data. Such user profiling enables us to examine the value of storage as public asset for individuals. Specifically, the marginal system cost impact (MCI) for different kinds of users at the same bus tend to converge when storage capacity increases.
\end{itemize}


The rest of the paper is organized as follows. Section \ref{sec2} introduces the economic dispatch problem with storage as public asset. Based on this formulation, in Section \ref{sec3}, we reexamine the definition of LMP, and propose  to decouple the LMP into constant and variant components. Section \ref{sec4} investigates the value of storage to the system as a whole as well as to the individuals. To better understand the consumers facing new market conditions, we employ the $k$-means clustering for user profiling in Section \ref{sec5}. Numerical studies verify our theoretical analysis on the value of storage in Section \ref{sec6}. Finally, we deliver the concluding remarks and point out possible future directions in Section \ref{sec7}.

\section{Problem Formulation}
\label{sec2}
In this section, we introduce the general economic dispatch problem with storage as public asset. The ISO conducts the economic dispatch over a period of interest. The key difference, compared with the conventional economic dispatch model, lies in the storage constraints. To better characterize the value of storage as public asset, we assume the ISO owns storage of total capacity $E$, and could distribute the storage in the grid at its will. To rigorously formulate this problem, we first introduce the storage constraints, then the DC approximation for the transmission line constraints, and finally the economic dispatch formulation.

\subsection{Storage Constraints}
Being public asset, the key benefit is that the ISO could distribute the storage system geographically. Specifically, denote the set of buses by $\mathcal{N}$, which contains $N\coloneq |\mathcal{N}|$ buses in the grid. Given a budget to purchase storage of total capacity $E$, the ISO could decide to install capacity $e_n$ at each bus $n\in \mathcal{N}$. This implies that
\begin{align}
    \sum_{n\in \mathcal{N}}e_n\le E. \label{eq1}
\end{align}

For the storage system at bus $n$, when conducting economic dispatch, the ISO decides its control action $u_{n,t}$ at time $t$. The action $u_{n,t}$ could be either positive (indicating charging) or negative (indicating discharging). This constructs the storage evolution constraints at each bus $n$:
\begin{align}
  x_{n,t}=x_{n,t-1}+u_{n,t},\\  
  0\le x_{n,t}\le e_n,
\end{align}
where $x_{n,t}$ denotes the state of charge (SoC) of storage at bus $n$ at time $t$. To ensure the maximal flexibility during the economic dispatch from time $0$ to time $T$, we set the terminal values of SoC both to be half of its capacity, i.e.,
\begin{align}
  x_{n,0}=x_{n,T}=\frac{e_n}{2},\  \forall n.
\end{align}
Note that, these boundary conditions also imply that within each economic dispatch cycle, there is no pure arbitrage. This also highlights the nature of public asset. 
{
\begin{remark}
    Thoughout this paper, the cost of storage is not taken
into consideration, since we want to highlight the impacts of large-scale deployments of publicly owned storage during the storage operation process.
\end{remark}
}
\subsection{Transmission Line Constraints}
The storage control actions allow us to characterize the transmission line capacity constraints. At each bus $n$, at each time $t$, we denote its generation by $g_{n,t}$ and its demand by $d_{n,t}$. Together with the storage control action $u_{n,t}$, we can calculate the net outflow $F_{n,t}$ at bus $n$:   
\begin{equation}
    F_{n,t}=g_{u,t}-u_{n,t}-d_{n,t}. \label{eq5}
\end{equation}

The DC approximation \cite{stott2009dc} for lossless transmission system states the Kirchhoff's laws in the transmission lines as follows:
\begin{equation}
    \begin{aligned}
    f_{nm,t}&=Y_{nm}(\theta_{n,t}-\theta_{m,t}),\\
     F_{n,t}&=\sum_{nm\in \mathcal{V}} f_{nm,t},
    \end{aligned}
\end{equation}
where $Y_{nm}$ is the susceptance of line $n$-$m$, $\theta_{n,t}$ denotes the phase angle at bus $n$ at time $t$, and $f_{nm,t}$ stands for the directed power flow of line $n$-$m$ at time $t$.

Hence, the transmission line capacity constraints simply require:
\begin{align}
    f_{nm,t}\le f_{nm}^{\text{max}}, \ \forall nm \in \mathcal{V}, \ \forall t,  
\end{align}
where $\mathcal{V}$ denotes the set of all transmission lines in the system. 
\subsection{Economic Dispatch with Storage}
With the aforementioned constraints, we can now formulate the economic dispatch problem with storage as public asset. Specifically, the ISO seeks to solve the following optimization problem (P1):
\begin{subequations}
  \begin{align}
    \text{(P1)} \quad \min\quad\ & \sum_{n\in\mathcal{N}}\sum\limits_{t=1}^T C_n(g_{n,t})\\
    s.t.\quad  
    &g_{n,t}-u_{n,t}-d_{n,t}=\sum_{m\in \mathcal{N}}Y_{nm}(\theta_{n,t}-\theta_{m,t}),\ \forall n,\ \forall t, \label{eq8b}\\
    &Y_{nm}(\theta_{n,t}-\theta_{m,t})\le f_{nm}^{\text{max}},\ \forall nm \in \mathcal{V},\ \forall t,\label{eq8c}\\
    &x_{n,t}=x_{n,t-1}+u_{n,t},\ \forall n,\ \forall t,\label{eq8d}\\
    &0\le x_{n,t}\le e_n,\ \forall n,\ \forall t,\label{eq8e} \\
    &x_{n,0}=\frac{e_n}{2},\ x_{n,T}=\frac{e_n}{2},\label{eq8f}\\
    &\sum_{n\in \mathcal{N}}e_n\le E. \label{eq8g}
  \end{align}
\end{subequations}
Note that $C_n(g_{n,t})$ denotes the generation cost function at bus $n$.
\begin{remark}It is possible that not all buses are connected to generators. For these degenerated buses, we can simply impose a sufficiently large cost to the corresponding generation cost function. {We choose not to include the ramping constraints in the model to highlight the role of storage. In essence, ramping constraints can be modeled as a virtual battery to provide additional flexibility. To better understand the temporal and spatial characteristics of this problem, we further assume the generation capacity for each generator is sufficiently large. Another simplification is that the
loads are assumed to be predicted perfectly, which helps us better understand how the energy storage differentially affects the prices in the system.}\footnote{In fact, without loss of generality, we can also manipulate the coefficients in the cost functions to impose the soft generation capacity constraints.}
\end{remark}

In the subsequent analysis, we adopt the quadratic cost function for analytical tractability:
\begin{assumption}
  The cost function $C_n(\cdot)$ for each bus $n$ is quadratic, i.e.,
 \begin{equation}
    C_n(g_{n,t})=\frac{1}{2}\cdot a_ng_{n,t}^2+b_ng_{n,t}+c_n,\  \forall t,\ \forall n.
 \end{equation}
\end{assumption}
{This assumption helps us examine the marginal impact of storage system at each bus neatly, which in turn enables us to better characterize the dynamics of how storage would influence different components in the system. }

\section{LMP Scheme with Storage}
\label{sec3}
The conventional LMP scheme is mostly a spatial concept. We can straightforwardly generalize the conventional definition to be the Lagrangian multipliers associated with problem (P1). However, it is important to distinguish the spatial components and the temporal components, which could enable us to better understand the value of storage systems. 
\subsection{Locational Marginal Price with Storage}
The conventional definition of LMP is defined as the shadow price for each bus $n$ for each time $t$. 
While the conventional ramping constraints already introduce certain level of temporal coupling in the short run, the integration of storage system strengthens such coupling effects across all the periods. We denote the locational price at bus $n$ at time $t$ by $p_{n,t}$. 

The closed form expression for $p_{n,t}$ can be derived from primal-dual analysis. Assigning the corresponding Lagrangian multipliers to the constraints (\ref{eq8b})-(\ref{eq8g}) in (P1), we can obtain the Lagrangian function $\mathcal{L}$ as follows:

\begin{equation}
\begin{aligned}
    \mathcal{L}=&\sum_{n\in\mathcal{N}}\sum\limits_{t=1}^T C_n(g_{n,t})+\rho\left(\sum_{n\in \mathcal{N}}e_n- E\right)\\
    &+\sum_{n\in\mathcal{N}}\sum\limits_{t=1}^T \nu_{n,t}\left[g_{n,t}-u_{n,t}-d_{n,t}\!-\!\!\!\sum_{nm\in \mathcal{V}}Y_{nm}(\theta_{n,t}-\theta_{m,t})\right]\\
    &+\sum_{nm\in\mathcal{V}}\sum\limits_{t=1}^T
    \pi_{nm,t}\left(Y_{nm}(\theta_{n,t}-\theta_{m,t})- f_{nm}^{\text{max}}\right)\\
    &+\sum_{n\in\mathcal{N}}\sum\limits_{t=2}^T
    \xi_{n,t}(x_{n,t}-x_{n,t-1}-u_{n,t})\\
    &+\sum_{n\in\mathcal{N}}\sum\limits_{t=1}^T
    \left[\lambda_{n,t}(x_{n,t}-e_n)-\mu_{n,t}x_{n,t}\right]\\
    &+\sum_{n\in\mathcal{N}}\left[\phi_{n,0}(x_{n,0}-\frac{e_n}{2})+\phi_{n,T}(x_{n,T}-\frac{e_n}{2})\right].
\end{aligned}
\end{equation}
Standard mathematical manipulations and the first order optimality conditions yield:
   \begin{subequations}
    \begin{align}
      a_ng_{n,t}^*+b_n+\nu_{n,t}^*=0,\ \forall n,\ \forall t,\label{eq12a}\\
      \sum_{nm\in\mathcal{V}}Y_{nm}(\nu_{m,t}^*-\nu_{n,t}^*)+\sum_{nm\in\mathcal{V}}\pi_{nm,t}^*Y_{nm}=0,\ \forall n,\ \forall t,\label{eq12b} \\
      -\nu_{n,t}^*-\xi_{n,t}^*=0,\ \forall n,\ \forall t,\label{eq12c}\\
      \xi_{n,t}^*-\xi_{n,t+1}^*+\lambda_{n,t}^*-\mu_{n,t}^*=0,\ \forall n,\ \forall t, \label{eq12d}\\
      -\sum_{t=1}^T \lambda_{n,t}^*+\rho^*-\frac{1}{2}\phi_{n,0}^*-\frac{1}{2}\phi_{n,T}^*=0.\ \forall n. \label{eq12e}
    \end{align}
  \end{subequations}
By rearranging (\ref{eq12a}), we have
  \begin{equation}
  \begin{aligned}
    -\nu_{n,t}^*&=a_ng_{n,t}^*+b_n\\
    &=a_n(d_{n,t}+u_{n,t}^*+F_{n,t}^*)+b_n\\
    &=(a_nd_{n,t}+b_n)+a_n(u_{n,t}^*+F_{n,t}^*)\\
    &:=p_{n,t}, \label{eq13}
  \end{aligned}
  \end{equation}
where the superscript $*$ indicates the optimal solution to the first order optimality conditions. The second equation holds due to (\ref{eq5}). The third equation indicates that LMP consists two parts. One is invariant in $E$, i.e., $a_nd_{n,t}+b_n$.
No matter how much storage is invested, all users should face such price.
We call this term \textit{constant locational marginal price} (\text{CLMP}). The other one,  $a_n(u_{n,t}+F_{n,t})$ is varying with $E$, since different storage capacity may change the optimal control actions. We call it \textit{variant locational marginal price} (\text{VLMP}). 
VLMP is affected by storage system both temporally and spatially. As the total storage capacity $E$ increases, the storage control actions will change accordingly. On the other hand, these control actions will also dramatically affect the power flow across the network ($F_{n,t}^*$'s). These two effects are coupled together and hard to distinguish.

Note that the power flow $F_{n,t}^*$ is a function of storage capacity $E$ (We will formally define such functions in Section \ref{sec4}). Using $F_{n,t}^*(E)$, we can define the conventional LMP, $p_{n,t}^{0}$:
\begin{equation}
    p_{n,t}^{0}=a_n(d_{n,t}+F_{n,t}^*(0))+b_n.
\end{equation}
Hence, compared with $p_{n,t}^0$, the storage system introduce a temporal component $a_nu_{n,t}^*$, and a spatial component $a_n(F_{n,t}^*(E)-F_{n,t}^*(0))$.
\par In fact, there is a simper way to understand the value of storage, by encoding all the temporal and spatial impact into a singe index: marginal system cost impact (MCI).

\subsection{MCI with Storage}
The MCI provides an integral treatment to examine the value of storage for individual user. Specifically, for each user $i$ at bus $n$, denote its load profile over the period of $T$ by a vector $\mathbf{L}_i=\{l_{i,n}^1,...,l_{i,n}^T\}$. We can define user $i$'s MCI over period of $T$ as follows:
\begin{equation}
  \begin{aligned}
    \text{MCI}_{i,n}&=\lim_{\delta\rightarrow 0}\frac{\sum_{t=1}^T \left(C_n\left(g_{n,t}+\frac{\delta l_{i,n}^{t}}{\Vert\mathbf{L}_{i,n}\Vert_1}\right)-C_n(g_{n,t})\right)}{\delta}\label{eq10}\\
    &=\lim_{\delta\rightarrow 0}\frac{\sum_{t=1}^T \left((a_ng_{n,t}+b_n)\cdot \frac{\delta l_{i,n}^{t}}{\Vert\mathbf{L}_{i,n}\Vert_1} + \frac{a_n}{2}\left(\frac{\delta l_{i,n}^{t}}{\Vert\mathbf{L}_{i,n}\Vert_1}\right)^2\right)}{\delta}\\
    &= \sum_{t=1}^T (a_ng_{n,t}+b_n)\cdot \frac{l_{i,n}^t}{\Vert\mathbf{L}_{i,n}\Vert_1}\\
    &=  \sum_{t=1}^T (a_nd_{n,t}+b_n) \frac{l_{i,n}^t}{\Vert\mathbf{L}_{i,n}\Vert_1}+\sum_{t=1}^T a_nu_{n,t} \frac{l_{i,n}^t}{\Vert\mathbf{L}_{i,n}\Vert_1}\\
    &\quad +\sum_{t=1}^Ta_nF_{n,t}\frac{l_{i,n}^t}{\Vert\mathbf{L}_{i,n}\Vert_1}.
  \end{aligned}
\end{equation}
\par It's clearly that $\text{MCI}$ can also be divided into two parts just as LMP: we call $ \sum_{t=1}^T (a_nd_{n,t}+b_n) \frac{l_{i,n}^t}{\Vert\mathbf{L}_{i,n}\Vert_1}$ \textit{the constant marginal system cost impact} (\text{CMCI})
and $\sum_{t=1}^T a_nu_{n,t} \frac{l_{i,n}^t}{\Vert\mathbf{L}_{i,n}\Vert_1}+\sum_{t=1}^Ta_nF_{n,t}\frac{l_{i,n}^t}{\Vert\mathbf{L}_{i,n}\Vert_1}$ \textit{the variant marginal system cost impact} (\text{VMCI}). The relationship between LMP and MCI is dictated by the following proposition.
\begin{proposition}
  For user $i$ at bus $n$, its $\text{MCI}_{i,n}$ is the weighted average electricity rate over $T$, i.e.,
  \begin{equation}
    \begin{aligned}
      \text{MCI}_{i,n}=\frac{1}{\Vert\mathbf{L}_{i,n}\Vert_1}\sum_{t=1}^T p_{n,t}\cdot l_{i,n}^t. \label{eq14}
    \end{aligned}
  \end{equation}
\end{proposition}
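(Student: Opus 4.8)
The plan is to observe that both sides of (\ref{eq14}) are assembled from the same marginal-cost term $a_n g_{n,t}+b_n$, so the identity reduces to a single substitution rather than any new computation.

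First, I would recall the third line of the derivation in (\ref{eq10}), where evaluating the limit $\delta\to 0$ already collapses the MCI to
\[
  \text{MCI}_{i,n}=\sum_{t=1}^T (a_n g_{n,t}+b_n)\,\frac{l_{i,n}^t}{\Vert\mathbf{L}_{i,n}\Vert_1};
\]
the term quadratic in $\delta$ vanishes under the limit, leaving only the linear marginal cost weighted by the normalized load share $l_{i,n}^t/\Vert\mathbf{L}_{i,n}\Vert_1$.

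Second, I would invoke the LMP expression (\ref{eq13}), whose first equality shows that at the optimal dispatch the locational price equals the marginal generation cost, i.e.\ $p_{n,t}=a_n g_{n,t}^*+b_n$. Substituting this identity termwise into the MCI expression and pulling the common factor $1/\Vert\mathbf{L}_{i,n}\Vert_1$ outside the sum gives $\text{MCI}_{i,n}=\frac{1}{\Vert\mathbf{L}_{i,n}\Vert_1}\sum_{t=1}^T p_{n,t}\,l_{i,n}^t$, which is exactly (\ref{eq14}).

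The only point requiring care---hardly an obstacle---is that the generation $g_{n,t}$ appearing in the MCI definition must be read as the optimal dispatch $g_{n,t}^*$ for the given storage capacity, so that its marginal cost genuinely coincides with the price $p_{n,t}$ derived in (\ref{eq13}). Once this alignment is noted, the proposition follows immediately; there is no separate inequality or additional limiting argument beyond the $\delta\to 0$ evaluation already performed in (\ref{eq10}).
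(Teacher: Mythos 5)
Your proof is correct and matches the argument the paper intends (the paper treats this proposition as immediate from the definitions and gives no separate proof): the MCI computation in (\ref{eq10}) already reduces to $\sum_t (a_n g_{n,t}^*+b_n)\,l_{i,n}^t/\Vert\mathbf{L}_{i,n}\Vert_1$, and identifying $a_n g_{n,t}^*+b_n$ with $p_{n,t}$ via (\ref{eq13}) yields the claim. Your remark that $g_{n,t}$ must be read as the optimal dispatch $g_{n,t}^*$ is the right point of care and is consistent with the paper's usage.
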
  
This proposition makes it clear that MCI achieves the same performance as the LMP does. Hence, it enables us to understand the value of storage to the individual users via a singe index. Based on MCI, we seek to answer the following key questions: does storage benefit all users as public asset? If not, what are the key features of different types of users, in terms of their realized benefits (if any)?

\section{Value of Storage}
\label{sec4}
In this section, we examine the value of storage in terms of social cost as well as individual electricity bills. Both aspects are important for the storage to be valuable public asset. Specifically, we first use parametric analysis to highlight the social benefit of integrating storage, and then use a prototype example to demonstrate the potential issues that the storage integration may impose on individual users. This motivates us severally examine the value of storage to individuals in the electricity pool model and the general network constrained model.
\subsection{Value of Storage for Social Cost}
We define the social cost as the total generation cost in the system over period of $T$. Hence, given the storage capacity investment of $E$, the ISO can solve the optimization problem (P1) and obtain the optimal solution and the corresponding optimal objective value. Due to the quadratic cost structure assumption, the optimal objective value is unique to each capacity $E$. Hence, to evaluate the social cost, we can represent it as a function of $E$. Formally, we define a parametric function $C^*(E)$ as follows:
\begin{equation}
\begin{aligned}
        C^*(E)=\min  \    &  \sum_{n\in\mathcal{N}}\sum\limits_{t=1}^T C_n(g_{n,t}) \\
                                      s.t.\ &  \sum_{n\in \mathcal{N}}e_n\le E,  \\
            & \text{Constraints }(\ref{eq8b})\text{-}(\ref{eq8f}).
\end{aligned}
\end{equation}
This parametric function establishes the relationship between total storage capacity $E$ and the corresponding minimal generation cost. The following lemma is a direct result of Corollary 4.4.9 in \cite{Romanko04aninterior}, which states the continuity property of $C^*(E)$:
\begin{lemma}
\label{lem4.1}
  The parametric function $C^*(E)$ is continuous over $[0,+\infty)$.
\end{lemma}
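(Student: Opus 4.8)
The plan is to exploit the fact that the parameter $E$ enters the program only through the right-hand side of the single affine constraint $\sum_{n\in\mathcal{N}}e_n\le E$, while the objective and all remaining constraints (\ref{eq8b})--(\ref{eq8f}) are independent of $E$. This places $C^*(E)$ squarely in the framework of right-hand-side parametric convex quadratic optimization, so the strategy is to verify the hypotheses of Corollary 4.4.9 in \cite{Romanko04aninterior} and invoke it; alongside this I would also outline a self-contained argument that makes the continuity transparent and isolates the only delicate point.

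First I would establish that $C^*(E)$ is a well-defined, finite function on $[0,+\infty)$. Collecting the decision variables $(g,u,x,e,\theta)$, the region cut out by (\ref{eq8b})--(\ref{eq8f}) together with $\sum_{n}e_n\le E$ is a convex polyhedron; it is nonempty for every $E\ge 0$ because setting $e_n=0$ reduces the problem to conventional economic dispatch, which is feasible under the assumption of sufficiently large generation capacity. Under the quadratic cost assumption the objective $\sum_{n,t}C_n(g_{n,t})$ is a convex quadratic that is coercive in $g$ (each $a_n>0$), so the minimum is attained and finite for each $E$.

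Next I would record two structural properties. \emph{Monotonicity:} enlarging $E$ only relaxes the constraint $\sum_{n}e_n\le E$, so the feasible set grows and $C^*(E)$ is non-increasing. \emph{Convexity:} for $E_1,E_2\ge 0$ with minimizers $z_1,z_2$ and any $\lambda\in[0,1]$, the convex combination $z_\lambda=\lambda z_1+(1-\lambda)z_2$ satisfies every $E$-independent constraint (they are affine) and its storage-capacity coordinates obey $\sum_{n}e_n\le\lambda E_1+(1-\lambda)E_2$, so $z_\lambda$ is feasible for $E_\lambda=\lambda E_1+(1-\lambda)E_2$; convexity of the objective then gives $C^*(E_\lambda)\le\lambda C^*(E_1)+(1-\lambda)C^*(E_2)$. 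Hence $C^*$ is convex on $[0,+\infty)$, which already yields continuity on the open interval $(0,+\infty)$.

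The main obstacle is the boundary point $E=0$, where convexity alone does not rule out a downward jump, so right-continuity must be argued separately. I would take any sequence $E_k\downarrow 0$ with minimizers $z_k$; since $0\le x_{n,t}\le e_n\le E_k$, the state-of-charge variables vanish, and the evolution equation (\ref{eq8d}) forces $u_{n,t}\to 0$, while coercivity keeps the generations, storage actions and line flows in a bounded set. Passing to a convergent subsequence produces a feasible point of the $E=0$ problem whose cost equals $\lim_k C^*(E_k)$, giving $\lim_{E\to 0^+}C^*(E)\ge C^*(0)$; combined with monotonicity ($C^*(E)\le C^*(0)$) this forces equality. Together with interior continuity, $C^*$ is continuous on all of $[0,+\infty)$, matching the conclusion one obtains directly from Corollary 4.4.9 in \cite{Romanko04aninterior}.
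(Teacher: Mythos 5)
Your proposal is correct, but it does substantially more than the paper, which offers no argument at all beyond the one-line statement that the lemma ``is a direct result of Corollary 4.4.9 in \cite{Romanko04aninterior}'' (continuity of the optimal value function of a convex quadratic program under right-hand-side perturbation). You also invoke that corollary, so in that sense you follow the paper's route, but your self-contained argument is the genuinely different contribution: convexity of $C^*$ (which the paper only establishes later, in Proposition~\ref{pro4.2}) gives continuity for free on the open interval $(0,+\infty)$, and you correctly isolate the one point where convexity is silent, namely right-continuity at the endpoint $E=0$, which you close with a compactness argument combined with monotonicity. This buys transparency and independence from the external reference, at the cost of two small technical debts you should discharge explicitly: (i) attainment of the minimum does not follow from coercivity in $g$ alone, since the objective is constant in $(u,x,e,\theta)$ --- either invoke the Frank--Wolfe theorem for convex quadratics bounded below on polyhedra, or note that the non-$g$ variables can be confined to a compact set; and (ii) the phase angles $\theta_{n,t}$ are only determined up to a uniform shift, so to extract a convergent subsequence of minimizers you should fix a reference bus and use connectivity plus the line-flow bounds to bound the remaining angles. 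Neither issue is a gap in the idea; both are routine to repair.
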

\begin{remark}
In fact, the parametric function and its continuity can be extended to every parametric function defined on (P1). For example, the optimal generation $g^*_{n,t}(E)$, storage control $u_{n,t}^*(E)$ and outflow $F_{n,t}^*(E)$ are all continuous over $[0, +\infty]$. In the subsequent analysis, we directly use such notations and their continuity properties.
\end{remark}
In fact, this minimal cost function enjoys additional properties:
\begin{proposition}
\label{pro4.2}
  $C^*(E)$ is monotonically non-increasing and convex in $E$. 
\end{proposition}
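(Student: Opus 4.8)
The plan is to treat $C^*(E)$ as the optimal value function of a parametric convex program and to exploit two structural facts: the budget parameter $E$ enters only through the single affine constraint (\ref{eq8g}), while all remaining constraints (\ref{eq8b})--(\ref{eq8f}) together with the objective are convex in the decision variables. Both claimed properties then follow from standard perturbation arguments, which I would carry out directly rather than invoking a black box. Throughout, let $z=(g,u,x,\theta,e)$ collect all decision variables.

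For monotonicity I would argue by feasible-set inclusion. Fix $E_1\le E_2$. Since the only constraint depending on the parameter is $\sum_{n}e_n\le E$, any $z$ feasible for parameter $E_1$ satisfies $\sum_{n}e_n\le E_1\le E_2$ and is therefore feasible for parameter $E_2$; the constraints (\ref{eq8b})--(\ref{eq8f}) are literally identical in the two problems. Hence the feasible region grows with $E$, and minimizing the same objective over a larger set can only lower the optimum, giving $C^*(E_2)\le C^*(E_1)$.

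For convexity I would use a convex-combination-of-optimizers argument. Fix $E_1,E_2\ge 0$ and $\theta\in[0,1]$, set $E_\theta=\theta E_1+(1-\theta)E_2$, and let $z^{(1)},z^{(2)}$ be optimal solutions attaining $C^*(E_1),C^*(E_2)$ (existence being guaranteed by the setting underlying Lemma~\ref{lem4.1}). Consider $z_\theta=\theta z^{(1)}+(1-\theta)z^{(2)}$. The key step is to verify that $z_\theta$ is feasible for parameter $E_\theta$: the budget bound holds because $\sum_{n}e_n^{\theta}=\theta\sum_{n}e_n^{(1)}+(1-\theta)\sum_{n}e_n^{(2)}\le\theta E_1+(1-\theta)E_2=E_\theta$, while every constraint in (\ref{eq8b})--(\ref{eq8f}) is affine in $z$ (the equalities (\ref{eq8b}),(\ref{eq8d}),(\ref{eq8f}) and the inequalities (\ref{eq8c}),(\ref{eq8e}), including the bound $x_{n,t}\le e_n$ which couples two variables linearly), so each is preserved under convex combination. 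Feasibility of $z_\theta$ for $E_\theta$ then yields $C^*(E_\theta)\le\sum_{n,t}C_n(g_{n,t}^{\theta})$.

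Finally I would invoke convexity of the objective. Under the Assumption each $C_n$ is quadratic with $a_n\ge 0$, hence convex, so $\sum_{n,t}C_n(\cdot)$ is convex in $g$. Therefore $\sum_{n,t}C_n(g_{n,t}^{\theta})\le\theta\sum_{n,t}C_n(g_{n,t}^{(1)})+(1-\theta)\sum_{n,t}C_n(g_{n,t}^{(2)})=\theta C^*(E_1)+(1-\theta)C^*(E_2)$. Chaining this with the previous inequality gives $C^*(E_\theta)\le\theta C^*(E_1)+(1-\theta)C^*(E_2)$, which is the desired convexity. The only point requiring care, and the closest thing to an obstacle, is the bookkeeping that the convex combination respects every constraint, in particular that $E$ enters solely through the relaxable budget bound; once the joint convexity of the constraint set in $(z,E)$ is recognized, the remainder is routine.
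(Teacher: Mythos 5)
Your proposal is correct and follows essentially the same route as the paper's proof in Appendix A: monotonicity via feasible-set inclusion, and convexity by showing the convex combination of the two optimal solutions is feasible for the interpolated budget $E_\theta$ and then applying convexity of the quadratic objective. Your write-up is in fact slightly more explicit than the paper's in verifying that each of the constraints (\ref{eq8b})--(\ref{eq8f}) is affine and hence preserved under convex combination, a step the paper asserts without detail.
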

The detailed proof is deferred to Appendix A. This proposition helps to identify the value of storage for social cost: 
it is not surprising to observe that a larger capacity will help improve the social welfare by reducing the total generation cost. The convexity property further eases the ISO's decision making on the optimal investment. This involves examining the amortized marginal cost for purchasing the storage systems as well as the expected marginal value of storage to the system. A detailed discussion is beyond the scope of our work.

\subsection{Motivating Example: 
Users Can Get Hurt}
\par While more storage is always beneficial to the system as a whole, it may not benefit every end user. We use a simple motivating example to highlight this fact, which will also provide us the necessary idea to investigate how the MCI's in the system evolve as capacity $E$ increases. 

Consider a two-period electricity pool model (thus, the subscript for location can be omitted). The total demands at the two periods are $d_1=10$MWh and $d_2=20$MWh, respectively. We assume a simple cost structure in the system, i.e., 
\begin{equation}
    C(g_t)=\frac{1}{2}g_t^2, \ t=1, 2.
\end{equation}
When there is no storage (i.e., $E=0$), it is straightforward to verify that:
\begin{equation}
    \begin{aligned}
        &p_{1}(0)=10\$/\text{MWh},\\
        &p_{2}(0)=20\$/\text{MWh}.
    \end{aligned}
\end{equation}
Assume there are only 2 users in the system: Alice and Bob. The load profile for Alice is $\mathbf{L}_A=(4,16)$MWh, and that for Bob is $\mathbf{L}_B=(6, 4)$MWh. These profiles allow us to determine their MCI's and the total generation cost without storage (i.e., $E=0$):
\begin{equation}
  \begin{aligned}
    \text{MCI}_A(0)=0.2\cdot10+0.8\cdot20=18\$/\text{MWh},\\
    \text{MCI}_B(0)=0.6\cdot10+0.4\cdot20=14\$/\text{MWh},\\
    C^*(0)=\frac{1}{2}\cdot (10^2+20^2)=250\$.
  \end{aligned}
\end{equation}
Suppose a storage of capacity $10$MWh is installed to improve the social welfare as a public asset, then $g_1^*(10)=g_2^*(10)=15$MWh. The prices over 2 periods are now $p_{1}(10)=p_{2}(10)=15\$/\text{MWh}$. Hence, with this storage system, we have
\begin{equation}
  \begin{aligned}
    \text{MCI}_A(10)=0.2\cdot15+0.8\cdot15=15\$/\text{MWh},\\
    \text{MCI}_B(10)=0.6\cdot15+0.4\cdot15=15\$/\text{MWh},\\
    C^*(10)=\frac{1}{2}\cdot (15^2+15^2)=225\$.
  \end{aligned}
\end{equation}
\par The social cost and MCI$_A$ are indeed reduced. However, storage does not do favor to Bob! MCI$_B$ increases, which means Bob will face a higher electricity bill. While it certainly illustrates the fact that the integration of storage may not benefit everyone, it also sheds light on how to examine the value of storage to different users: look at their load profiles!
\subsection{Electricity Pool Model}
To understand the value of storage for individual users, we first consider the electricity pool model to highlight the temporal impacts, as all the network constraints are ignored in this model. {This model can be well applied in micro-grid analysis}. The optimization problem (P1) can be simplified as follows:
\begin{subequations}
  \begin{align}
    \text{(P2)} \quad \min\quad\ & \sum\limits_{t=1}^T C(g_{t})\\
    s.t.\quad  
    &g_t-u_t=d_t,\ \forall t,\label{eq20b}\\
    &x_{t}=x_{t-1}+u_{t},\ \forall t,\label{eq20c}\\
    &0\le x_{t}\le E,\ \forall t,\label{eq20d} \\
    &x_{0}=\frac{E}{2},\ x_{T}=\frac{E}{2}.\label{eq20e}
  \end{align}
\end{subequations}
\par Clearly, Proposition \ref{pro4.2} still holds in (P2) since (P2) is a special case of (P1). Moreover, we can estimate a global lower bound for the total generation cost by Jensen's inequality:
\begin{equation}
\begin{aligned}
  C^*(E)&=   \sum\limits_{t=1}^T C(g^*_{t}(E))\\
  &\ge T\cdot C(\bar{g})=T\cdot C(\bar{d}),
\end{aligned}
\end{equation}
where $\bar{g}=\frac{1}{T}\sum_{t=1}^Tg_t$ and $\bar{d}=\frac{1}{T}\sum_{t=1}^Td_t$. The last equality is due to no pure arbitrage (i.e., $\sum_{t=1}^Tu_t=0$). This lower bound is tight when the storage capacity is sufficiently large, which forces the dispatched generations over all time slots become $\bar{g}$. At this point, the MCI for each user of any load profile becomes the same. We rigorously characterize the convergence of MCI in the following lemma.
\begin{lemma}\label{lem4.3}
  In the electricity pool model, as $E$ grows, the MCI for each user will ultimately converge to $a\bar{d}+b$.
\end{lemma}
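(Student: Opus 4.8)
The plan is to reduce the claim about prices (encoded in the MCI) to a claim about the optimal generation profile, and then exploit strict convexity of the cost. Specializing the weighted-average identity~(\ref{eq14}) to the pool model (dropping the location index) gives $\text{MCI}_i = \frac{1}{\Vert\mathbf{L}_i\Vert_1}\sum_{t=1}^T p_t\, l_i^t$ with $p_t = a\,g_t^*(E)+b$ by~(\ref{eq13}). Since the weights $l_i^t/\Vert\mathbf{L}_i\Vert_1$ are nonnegative and sum to one (because $\Vert\mathbf{L}_i\Vert_1=\sum_t l_i^t$ for nonnegative loads), any common value attained by all the $p_t$ is also attained by the weighted average, \emph{independently of the load profile}. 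Hence it suffices to prove that $g_t^*(E)\to\bar d$ for every $t$ as $E$ grows, from which $p_t\to a\bar d+b$ and therefore $\text{MCI}_i\to a\bar d+b$.

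First I would pin down the optimal cost. The Jensen bound established just above the lemma gives $C^*(E)\ge T\,C(\bar d)$ for all $E$, using the no-arbitrage identity $\sum_t u_t=0$ (which forces $\bar g=\bar d$). To show this bound is actually attained once $E$ is large, I would exhibit the flat dispatch $g_t=\bar d$, $u_t=\bar d-d_t$, whose state of charge is $x_t = \tfrac{E}{2}+\sum_{s=1}^t(\bar d-d_s)$. Because the horizon is finite and the demands are bounded, the cumulative deviations are bounded by $M:=\max_{1\le t\le T}\bigl|\sum_{s=1}^t(\bar d-d_s)\bigr|$, so choosing any $E\ge 2M$ keeps $0\le x_t\le E$ for all $t$ and the boundary conditions $x_0=x_T=\tfrac{E}{2}$ satisfied, making the flat dispatch feasible. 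Being feasible and meeting the lower bound, it is optimal, so $C^*(E)=T\,C(\bar d)$ for every $E\ge 2M$.

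Finally I would upgrade cost-optimality to pointwise generation equality, which is the crux of the argument, since MCI depends on the individual prices $p_t$ rather than only on the aggregate cost. Because $a>0$, the cost $C$ is \emph{strictly} convex, so Jensen's inequality is tight if and only if $g_1=\cdots=g_T$; combined with $\bar g=\bar d$ this yields $g_t^*(E)=\bar d$ for all $t$ whenever $E\ge 2M$. Consequently $p_t=a\bar d+b$ for all $t$, and substituting back gives $\text{MCI}_i=(a\bar d+b)\,\tfrac{1}{\Vert\mathbf{L}_i\Vert_1}\sum_{t=1}^T l_i^t=a\bar d+b$, for every user regardless of $\mathbf{L}_i$. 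I expect the main obstacle to be this last transfer step: the Jensen bound only controls the \emph{average} generation, and it is strict convexity (uniqueness of the minimizer) that converts equality in cost into the pointwise price equalization the lemma actually asserts; the feasibility bookkeeping for the state-of-charge trajectory is the only other place requiring care, and it merely fixes the finite threshold $E\ge 2M$ beyond which the convergence is in fact exact.
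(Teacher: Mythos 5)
Your proof is correct, but it takes a genuinely different route from the paper's. The paper argues on the dual side: it writes the KKT conditions for (P2), asserts that for large $E$ the state-of-charge bounds $0\le x_t\le E$ are non-binding so that $\lambda_t^*=\mu_t^*=0$ by complementary slackness, concludes from the recursion $\xi_t^*-\xi_{t+1}^*+\lambda_t^*-\mu_t^*=0$ that the storage multipliers $\xi_t^*$ are time-invariant, and hence that the marginal costs $ag_t^*+b$ (and so the $g_t^*$) are all equal, with $\sum_t u_t=0$ pinning the common value to $\bar d$. You instead argue on the primal side: you exhibit the flat dispatch $g_t=\bar d$, $u_t=\bar d-d_t$, verify its state-of-charge trajectory is feasible once $E\ge 2M$ with $M=\max_t\bigl|\sum_{s\le t}(\bar d-d_s)\bigr|$, note it attains the Jensen lower bound $T\,C(\bar d)$, and then use strict convexity ($a>0$) to conclude that any optimal generation profile must itself be flat. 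Your version buys two things the paper's does not: an explicit finite threshold $2M$ beyond which convergence is exact, and a rigorous justification of the step the paper only asserts (that the SoC constraints are non-binding at the optimum for large $E$ --- your construction sidesteps needing this about the optimum at all by characterizing the optimum directly). The paper's KKT route, on the other hand, is the one that transfers to the network-constrained setting of Proposition~\ref{pro4.7}, where a closed-form flat dispatch is no longer available. Both arguments implicitly require $a>0$, which the paper never states but uses throughout; your final step reducing the MCI claim to the price claim via the convex-combination weights $l_i^t/\Vert\mathbf{L}_i\Vert_1$ matches identity~(\ref{eq14}) and is fine.
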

\begin{proof}
  The Lagrangian function $\mathcal{L}$ can be formulated as follows:

\begin{equation}
\begin{aligned}
    \mathcal{L}=&\sum\limits_{t=1}^T C_n(g_{t})
    +\sum\limits_{t=1}^T \nu_{t}(g_{t}-u_{t}-d_{t})\\
    &+\sum\limits_{t=2}^T
    \xi_{t}(x_{t}-x_{t-1}-u_{t})
    +\sum\limits_{t=1}^T
    \left[\lambda_{t}(x_{t}-E)-\mu_{t}x_{t}\right]\\
    &+\phi_{0}\left(x_{0}-\frac{E}{2}\right)+\phi_{T}\left(x_{T}-\frac{E}{2}\right).
\end{aligned}
\end{equation}
  The first-order optimality conditions require:
   \begin{subequations}
    \begin{align}
      ag_{t}^*+b+\nu_{t}^*=0,\ \forall t,\label{eq23a}\\
      -\nu_{t}^*-\xi_{t}^*=0,\ \forall t,\label{eq23b}\\
      \xi_{t}^*-\xi_{t+1}^*+\lambda_{t}^*-\mu_{t}^*=0,\ \ \forall t, \label{eq23c}\\
      -\sum_{t=1}^T \lambda_{t}^*-\frac{1}{2}\phi_{0}^*-\frac{1}{2}\phi_{T}^*=0.\ \label{eq23d}
    \end{align}
  \end{subequations}
  When $E$ is sufficiently large, both LHS and RHS of (\ref{eq20d}) won't be binding at any time $t$. According to complementary slackness condition \cite{boyd2004convex}, we have $\lambda_t^*=\mu_t^*=0$. From (\ref{eq23c}), we know that $\xi_t$ will be the same for each $t$. Combining (\ref{eq23b}) with (\ref{eq23a}), we obtain $\xi_t^*=ag_t^*+b$. This implies that all $g_t^{*}$'s will be the same. Constraint (\ref{eq20e}) further requires $\sum_t^T u_t=0$. Hence $g_t^*=\bar{d}$ if $E$ is sufficiently large, which proves the proposition.
\end{proof}
While Lemma \ref{lem4.3} characterizes the MCI after convergence, it does not provide intuition on the convergent dynamics. It remains unknown whether the MCI for each user will monotonically converge to $a\bar{d}+b$, or it will oscillate around the convergent point. Through numerical observations, we find it hard to characterize the individual MCI dynamics. However, we are able to use the upper bound and lower bound of MCI to characterize the group dynamics. Specially, we can define the upper bound and the lower bound of MCI for given storage capacity $E$ as follows.
\begin{definition}
  The upper bound of MCI, UBMCI and the lower bound of MCI, LBMCI can be defined as parametric functions:
\begin{equation}
  \begin{aligned}
    \text{UBMCI}(E)=\max_i \text{MCI}_i(E), \\
    \text{LBMCI}(E)=\min_i \text{MCI}_i(E).
  \end{aligned}
\end{equation}
\end{definition}

It's straightforward to observe that UBMCI($E$) and LBMCI($E$) can be equivalently represented as follows:
\begin{equation}
  \begin{aligned}
    \text{UBMCI}(E)=ag_M^*(E)+b,\\
    \text{LBMCI}(E)=ag_m^*(E)+b,
  \end{aligned} 
\end{equation}
where $M$ and $m$ are defined as follows:
\begin{align}
  M\coloneq \arg\max_{1\le t\le T} \{g_t^*(E)\}=\arg\max_{1\le t\le T} \{d_t+u_t^*(E)\},\\
  m\coloneq \arg\min_{1\le t\le T} \{g_t^*(E)\}=\arg\min_{1\le t\le T} \{d_t+u_t^*(E)\}.
\end{align}

\begin{remark}
 UBMCI and LBMCI are obviously unique in $E$, so they can also be represented in the parametric functional forms: UBMCI$(E)$ and LBMCI$(E)$. Since $g_t^*(E)$ is continuous in $E$, UBMCI$(E)$ and LBMCI$(E)$ are also continuous in $E$.
 \end{remark}
 \par With these definitions, the following proposition characterizes the group dynamics of MCI.
\begin{proposition}
  In the electricity pool model, UBMCI$(E)$ is monotonically decreasing in $E$; LBMCI$(E)$ is monotonically increasing in $E$; and both of them converge to $a\bar{d}+b$, as $E$ approaches infinity. 
\end{proposition}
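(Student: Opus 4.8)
The plan is to reduce the claim to a statement about the optimal generation profile and then exploit the smoothing structure of the optimal solution. Since $\mathrm{UBMCI}(E)=ag_M^*(E)+b=a\max_t g_t^*(E)+b$ and $\mathrm{LBMCI}(E)=a\min_t g_t^*(E)+b$ with $a>0$, it suffices to prove that $\max_t g_t^*(E)$ is non-increasing in $E$, that $\min_t g_t^*(E)$ is non-decreasing in $E$, and that both tend to $\bar d$. The convergence is immediate from Lemma \ref{lem4.3}: once $E$ is large enough every $g_t^*$ equals $\bar d$, so $\max_t g_t^*(E)=\min_t g_t^*(E)=\bar d$, whence both $\mathrm{UBMCI}(E)$ and $\mathrm{LBMCI}(E)$ equal $a\bar d+b$. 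The whole difficulty therefore lies in the two monotonicity claims.

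First I would recast (P2) in cumulative variables. Writing $G_t:=\sum_{s=1}^t g_s$ and $D_t:=\sum_{s=1}^t d_s$, constraints (\ref{eq20b})--(\ref{eq20e}) are equivalent to $G_0=0$, $G_T=D_T$, and the tube constraint $D_t-\tfrac{E}{2}\le G_t\le D_t+\tfrac{E}{2}$, with $g_t=G_t-G_{t-1}$. Because $\sum_t g_t=D_T$ is fixed, the linear part of the objective is constant and minimizing $\sum_t C(g_t)$ is equivalent to minimizing the strictly convex ``Dirichlet energy'' $\sum_t(G_t-G_{t-1})^2$ over the tube. Two features make this reformulation convenient: the feasible tubes are nested in $E$ (a larger $E$ gives a wider tube, with the fixed endpoints always interior), and, by the first-order conditions (\ref{eq23a})--(\ref{eq23d}), the optimal generation is piecewise constant, taking a single value on each maximal ``run'' of times during which the storage stays strictly inside $[0,E]$ and changing value only where the storage touches a boundary.

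The core of the argument is a sign analysis of the extremal runs. Combining (\ref{eq23a})--(\ref{eq23c}) gives $a(g_t^*-g_{t+1}^*)=\mu_t^*-\lambda_t^*$, so by complementary slackness the generation can only \emph{increase} across an upper contact ($x_t^*=E$) and only \emph{decrease} across a lower contact ($x_t^*=0$). Hence a maximal run carrying the peak value $\max_t g_t^*$ must be bounded on its left by an upper contact (or the initial endpoint) and on its right by a lower contact (or the terminal endpoint); symmetrically, a run carrying the valley value $\min_t g_t^*$ is bounded by a lower contact on the left and an upper contact on the right. On any interval of capacities over which the active (contact) set is fixed, each pinned boundary value is $G_t=D_t\pm\tfrac{E}{2}$, so $dG_t/dE=\pm\tfrac12$, and on a free run the constant generation equals the chord slope between its two boundary values. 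Differentiating, the peak run's slope has derivative $\big(-\tfrac12-\tfrac12\big)/\ell\le 0$ and the valley run's slope has derivative $\big(\tfrac12+\tfrac12\big)/\ell\ge 0$ (with a vanishing contribution from any genuine endpoint), so the peak is locally non-increasing and the valley locally non-decreasing in $E$.

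Finally I would globalize this local statement. There are only finitely many possible active sets, so $[0,\infty)$ splits into finitely many intervals on each of which $g^*(E)$ is affine in $E$ with the slopes just computed; using the continuity of $g_t^*(E)$ from the Remark following Lemma \ref{lem4.1}, the functions $\max_t g_t^*(E)$ and $\min_t g_t^*(E)$ are continuous and piecewise affine, so the per-piece sign of the derivative yields global monotonicity. The main obstacle I anticipate is exactly the bookkeeping at the breakpoints where the contact set changes, where runs merge or split and the identity of the extremal run can jump. The clean way around it is to avoid tracking any single run: argue instead that $\max_t g_t^*(E)$ (resp. $\min_t g_t^*(E)$) is a continuous function whose one-sided derivative is non-positive (resp. non-negative) on every piece, which suffices for monotonicity without resolving the combinatorics of the transitions.
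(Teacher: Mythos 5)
Your proof is correct, but it takes a genuinely different route from the paper's. The paper proves the monotonicity by contradiction using only two global facts: the convexity of the quadratic cost and the non-increasing property of $C^*(E)$ from Proposition \ref{pro4.2}. It writes $\epsilon_t = g_t^*(E+\delta)-g_t^*(E)$, assumes the peak generation increases, expands $C^*(E+\delta)$ around the old solution, and uses $\sum_t\epsilon_t=0$ together with $g_t^*(E)\le g_M^*(E)$ to lower-bound the cross term and conclude $C^*(E+\delta)>C^*(E)$ --- a contradiction that requires no structural knowledge of the optimizer. You instead characterize the optimizer itself: the cumulative-generation (taut-string) reformulation, the piecewise-constant runs between storage contacts, the sign of the generation jump at upper versus lower contacts (which correctly forces a peak run to be anchored by an upper contact on the left and a lower contact on the right), and the resulting chord-slope formula whose derivative in $E$ is $-1/\ell$ for the peak run and $+1/\ell$ for the valley run. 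Your route buys explicit rates of change and a clear picture of where the extremal prices are anchored, at the price of invoking the piecewise-affine structure of the parametric QP and active-set bookkeeping at breakpoints, which you flag but resolve only at the level of one-sided derivatives; the paper's exchange argument is more elementary and self-contained. Both arguments share the identical convergence step via Lemma \ref{lem4.3}, and both face the same small loose end --- the identity of $\arg\max_t g_t^*(E)$ can change with $E$ --- which you handle via continuity plus per-piece one-sided derivatives and the paper handles by taking $\delta$ infinitesimal and appealing to Lemma \ref{lem4.1}.
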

The two bounds are tight. Their monotonicities imply that a larger storage capacity can help reduce the variance of MCI, which partially indicates that more storage stabilizes the real time prices by providing more fluidity in the market. However, the monotonically increasing lower bound also indicates that more storage is not beneficial to every end user. For those who concentrate their power consumption at low-price periods, their MCI's are more likely to increase. On the contrary, for those who consume more at high-price periods, it's more possible that their MCI's will decrease with more storage in the system.

\subsection{Network Constrained Model}
\par After investigating storage integration's temporal impact on individual end users, we can now turn to the network constrained model to examine the combined temporal and spatial impacts. {This model can show the power of grid interchanges.}

While it is challenging to directly analyze the value of storage for individual users in this case, we start by examining the value of storage for each node.
\begin{proposition} 
\label{pro4.6}
  For the optimal dispatch profile given $E$, the marginal values of storage at all buses are the same. They are all non-increasing and non-negative. Mathematically,
  \begin{align} 
    \frac{\partial C^*(E)}{\partial e_1}\bigg |_{e_1=e_1^*}=...=\frac{\partial C^*(E)}{\partial e_N}\bigg|_{e_N=e_N^*}\ge 0.\label{eq26}
  \end{align}
\end{proposition}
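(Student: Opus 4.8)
The plan is to read $C^*(E)$ as a two-level problem: an inner dispatch problem that, for a fixed capacity allocation $(e_1,\dots,e_N)$, minimizes total generation cost over the remaining variables, and an outer allocation problem that distributes the budget $E$ subject to \eqref{eq8g}. Writing the inner optimal value as $\hat C(e_1,\dots,e_N)$, the marginal value of storage at bus $n$ is $\partial \hat C/\partial e_n$ evaluated at the optimal allocation $e_n^*$, and I would establish the three claims --- equality across buses, the sign, and monotonicity in $E$ --- in that order.

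For equality I would first give the economic intuition: were the marginal value at some bus $j$ to strictly exceed that at bus $k$, one could shift an infinitesimal amount of capacity from $k$ to $j$ without violating \eqref{eq8g}, strictly lowering the cost and contradicting the optimality of $(e_1^*,\dots,e_N^*)$. To make this rigorous I would apply the envelope theorem to the inner problem. Since $e_n$ enters the Lagrangian only through $\lambda_{n,t}(x_{n,t}-e_n)$ in \eqref{eq8e} and the boundary terms $\phi_{n,0}(x_{n,0}-e_n/2)+\phi_{n,T}(x_{n,T}-e_n/2)$ from \eqref{eq8f}, differentiation yields $\partial \hat C/\partial e_n = -\sum_t \lambda_{n,t}^* - \tfrac12(\phi_{n,0}^*+\phi_{n,T}^*)$. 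The first-order condition \eqref{eq12e} asserts exactly that this quantity equals $-\rho^*$ for every $n$, where $\rho^*$ is the multiplier of the budget constraint \eqref{eq8g}. Hence every bus shares the common marginal value $-\rho^*$, equivalently the budget shadow price.

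The sign and the monotonicity then follow from properties of $\rho^*$ and of $C^*(E)$. The marginal value of storage --- the cost reduction per added unit, namely $\rho^*$ --- is non-negative because $\rho^*$ is the KKT multiplier of the inequality \eqref{eq8g} in a minimization, and this is consistent with $C^*$ being non-increasing (Proposition \ref{pro4.2}) through the sensitivity identity $dC^*/dE = -\rho^*$. For monotonicity, I would observe that the common marginal value viewed as a function of the budget is $\rho^*(E) = -\,dC^*/dE$; since $C^*(E)$ is convex (Proposition \ref{pro4.2}), its derivative is non-decreasing, so $\rho^*(E)$ is non-increasing --- the diminishing-returns statement that additional storage is worth less as the grid becomes better equipped.

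The main obstacle I anticipate is differentiability. The value functions $\hat C$ and $C^*$ of this parametric QP are continuous (by the remark after Lemma \ref{lem4.1}) but need not be smooth where the active set changes, so the envelope step and the identity $dC^*/dE=-\rho^*$ should be phrased with one-sided directional derivatives or subgradients, the optimal multipliers possibly forming a set at kinks; the equalities above then hold for the matching one-sided quantities. A second, minor subtlety is the implicit constraint $e_n\ge 0$: condition \eqref{eq12e} treats each $e_n$ as free, corresponding to an interior allocation, so I would either assume $e_n^*>0$ at every bus or restrict the equality to buses that actually receive storage, since at an unused bus the marginal value can be strictly smaller and the equality weakens to an inequality. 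Apart from these regularity caveats, the argument is a direct reading of \eqref{eq12e} together with the convexity of Proposition \ref{pro4.2}.
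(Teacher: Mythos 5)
Your proposal follows essentially the same route as the paper: both identify the bus-$n$ marginal value with $\sum_t\lambda_{n,t}^*+\tfrac12(\phi_{n,0}^*+\phi_{n,T}^*)$, use the first-order condition \eqref{eq12e} to equate it to the common budget multiplier $\rho^*$ (non-negative as the KKT multiplier of the inequality \eqref{eq8g}), and obtain monotonicity from the convexity of $C^*(E)$ in Proposition \ref{pro4.2}. Your added care about the sign convention (the derivative of cost is $-\rho^*$ while the ``value'' is $\rho^*$), about non-differentiability at active-set changes, and about the implicit $e_n\ge 0$ constraint are refinements the paper glosses over, but they do not change the argument.
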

\begin{proof}
    Rearranging the first-order condition for $e_n$, i.e., equation (\ref{eq12e}), yields that
    \begin{align}
         \sum_{t=1}^T \lambda_{n,t}^*+\frac{1}{2}\phi_{n,0}^*+\frac{1}{2}\phi_{n,T}^*=\rho^*\ge 0.\ \forall n.\label{eq30}
    \end{align}
    Note that the Lagrangian multiplier $\rho^*$ is associated with an inequality. Hence, by definition, it is non-negative. Also, the LHS of (\ref{eq30}) is exactly the marginal value of storage for each bus $i$, i.e., $ \frac{\partial C^*(E)}{\partial e_i}\bigg |_{e_i=e_i^*}$. This observation immediately leads to the main conclusion in Proposition \ref{pro4.6}.
    \par The non-increasing property is due to the convexity and non-increasing property of $C^*(E)$, as illustrated in Proposition \ref{pro4.2}.
\end{proof}
\par Next, we want to figure out the evolving dynamics of MCI in the network constrained model. Although we cannot establish the monotonicity for the MCI upper bound/lower bound across the system, we observe interesting phenomenon at each bus. Namely, as storage capacity $E$ grows, for each bus, its hourly generations across all the time slots converge to the same level. This indicates that the locational MCI also converges.
\begin{proposition}\label{pro4.7}
  In the general network constrained model, as $E$ grows, the MCI for bus $n$ will converge to $a_n\tilde{g_n}+b_n$, where $\tilde{g}_n$ is the solution to (P3):
  \begin{subequations}
    \begin{align}
      \text{(P3)} \quad \min\quad\ & \sum_{n\in\mathcal{N}} C_n(g_{n})\\
      s.t.\quad  
      &g_{n}-\frac{1}{T}\sum_{t=1}^Td_{n,t}=\sum_{m\in \mathcal{N}}Y_{nm}(\theta_{n}-\theta_{m}),\ \forall n,  \label{eq31b}\\
      &Y_{nm}(\theta_{n}-\theta_{m})\le f_{nm}^{\text{max}},\ \forall nm\in\mathcal{V}.\label{eq31c}
    \end{align}
  \end{subequations}
\end{proposition}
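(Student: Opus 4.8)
The plan is to reduce the convergence of $\text{MCI}_{i,n}$ to the flattening of the optimal hourly generations at each bus. Recall from (\ref{eq10}) that $\text{MCI}_{i,n}=\sum_{t=1}^T(a_ng_{n,t}^*(E)+b_n)\,l_{i,n}^t/\Vert\mathbf{L}_{i,n}\Vert_1$, and that the weights $l_{i,n}^t/\Vert\mathbf{L}_{i,n}\Vert_1$ sum to one. Hence, once I show $g_{n,t}^*(E)\to\tilde g_n$ as $E\to\infty$ for every $t$, it follows at once that $\text{MCI}_{i,n}\to(a_n\tilde g_n+b_n)\sum_t l_{i,n}^t/\Vert\mathbf{L}_{i,n}\Vert_1=a_n\tilde g_n+b_n$, regardless of the load profile. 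So the whole statement rests on proving that the per-bus generation profile converges to the single-period optimum $\tilde g_n$ of (P3).

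To pin down this limit I would use a two-sided argument (lower bound plus achievability) rather than a direct projection of the KKT conditions (\ref{eq12a})--(\ref{eq12e}) as was done in Lemma \ref{lem4.3}, since the line-flow multipliers $\pi_{nm,t}^*$ make the complementary-slackness bookkeeping delicate when one time-averages the stationarity condition (\ref{eq12b}). For the lower bound, fix any $E$ and an optimal dispatch of (P1), and set $\bar g_n:=\frac1T\sum_t g_{n,t}^*$, $\bar\theta_n:=\frac1T\sum_t\theta_{n,t}^*$, $\bar d_n:=\frac1T\sum_t d_{n,t}$. Convexity of each $C_n$ and Jensen's inequality give $\sum_t C_n(g_{n,t}^*)\ge T\,C_n(\bar g_n)$. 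Averaging the balance equation (\ref{eq8b}) over $t$ and using $\sum_t u_{n,t}^*=0$ (forced by (\ref{eq8d}) and (\ref{eq8f})) shows that $(\bar g_n,\bar\theta_n)$ satisfies (\ref{eq31b}) with demand $\bar d_n$, and averaging (\ref{eq8c}) shows it satisfies (\ref{eq31c}); thus $(\bar g_n,\bar\theta_n)$ is feasible for (P3). Writing $V^*:=\sum_n C_n(\tilde g_n)$ for the optimal value of (P3), this yields $\sum_n C_n(\bar g_n)\ge V^*$ and therefore $C^*(E)\ge T\,V^*$ for every $E$.

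For the matching upper bound I would build a feasible point of (P1) from the (P3) optimum $(\tilde g_n,\tilde\theta_n)$: set $g_{n,t}=\tilde g_n$ and $\theta_{n,t}=\tilde\theta_n$ for all $t$, so each flow $F_{n,t}=\sum_m Y_{nm}(\tilde\theta_n-\tilde\theta_m)=\tilde g_n-\bar d_n$ is constant in $t$ and the line constraints (\ref{eq8c}) hold by (P3)-feasibility. The storage then only absorbs the demand fluctuation, $u_{n,t}=\bar d_n-d_{n,t}$, which satisfies $\sum_t u_{n,t}=0$, so (\ref{eq8f}) holds, and whose running sum gives the trajectory $x_{n,t}=\tfrac{e_n}{2}+\sum_{s\le t}(\bar d_n-d_{n,s})$. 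Over the finite horizon $T$ this trajectory has bounded excursion, so for $E$ (hence each $e_n$) large enough it stays in $[0,e_n]$, making the point feasible with cost exactly $T\,V^*$. Combined with the lower bound, $C^*(E)=T\,V^*$ eventually; strict convexity of the time-summed objective then forces each optimal profile to be time-invariant, $g_{n,t}^*=\bar g_n$, and feasibility of $(\bar g_n,\bar\theta_n)$ for (P3) together with $\sum_n C_n(\bar g_n)=V^*$ identifies $\bar g_n=\tilde g_n$, i.e. $g_{n,t}^*(E)\to\tilde g_n$.

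The hard part will be the achievability step: guaranteeing that the SoC trajectory stays within $[0,e_n]$ \emph{simultaneously} at every bus while respecting the global budget $\sum_n e_n\le E$. This is precisely where finiteness of the horizon $T$ is essential --- the partial sums $\sum_{s\le t}(\bar d_n-d_{n,s})$ are bounded by a quantity depending only on the fixed demand profile, so allocating each $e_n$ above twice that bound (which a sufficiently large $E$ permits) suffices; the continuity of the parametric functions noted in the remark after Lemma \ref{lem4.1} then upgrades ``eventually equal'' to a genuine limit. A secondary point needing care is the uniqueness used to identify $\bar g_n$ with $\tilde g_n$: this relies on strict convexity of $\sum_n C_n(g_n)$ in $g$ (i.e. $a_n>0$ under the quadratic-cost assumption), which makes the (P3) minimizer unique in the generation variables even though its phase angles need not be.
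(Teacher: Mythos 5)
Your proof is correct, and it takes a genuinely different route to the key step. The paper establishes time-invariance of the optimal generations directly from the KKT system: it asserts that for large $E$ the SoC bounds (\ref{eq8e}) are non-binding, so $\lambda_{n,t}^*=\mu_{n,t}^*=0$, whence (\ref{eq12d}) makes $\xi_{n,t}^*$ constant in $t$ and (\ref{eq12a})--(\ref{eq12c}) give $a_ng_{n,t}^*+b_n=\xi_{n,t}^*$ for all $t$; it then shows the resulting time-invariant problem (P4) and (P3) have the same feasible set, using exactly the two correspondences you use (time-averaging of (\ref{eq8b})--(\ref{eq8c}) in one direction, and the construction $u_{n,t}=\bar d_n-d_{n,t}$ in the other). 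You replace the KKT step with a primal sandwich: $C^*(E)\ge T\,V^*$ by Jensen plus feasibility of the time-average for (P3), $C^*(E)\le T\,V^*$ eventually by explicit construction, and then the equality case of Jensen forces time-invariance and identifies the limit. Your route buys two things: it sidesteps the paper's unproved assertion that the optimal allocation $e_n^*$ eventually makes every SoC constraint slack at every bus (you only need \emph{some} feasible allocation with that property, which the budget argument supplies), and it makes explicit that the identification $\bar g_n=\tilde g_n$ rests on strict convexity ($a_n>0$), a hypothesis the paper uses silently. You also correctly supply the (elementary, but omitted in the paper) reduction from convergence of $g_{n,t}^*(E)$ to convergence of the MCI via the convex-combination form of (\ref{eq10}). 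The only cosmetic point: once you have $g_{n,t}^*(E)=\tilde g_n$ for all $E$ beyond a threshold, the limit statement is immediate and the appeal to continuity of the parametric functions is not actually needed.
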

\par The detailed proof can be found in Appendix C. One immediate result is that locational upper and lower bounds for MCI at each bus will both converge to $a_n\tilde{g}_n+b_n$. Note the convergent values can be heterogeneous among different buses. This implies that the global upper bound will converge to $\max_{n}\{a_n\tilde{g}_n+b_n\}$ while the global lower bound will converge to $\min_{n}\{a_n\tilde{g}_n+b_n\}$.
\section{User Profiling}
\label{sec5}
To better understand the MCI dynamics for heterogeneous end users, we first conduct $k$-means clustering to identify representative end user load profiles, and then examine how their MCI's (also, CMCI's and VMCI's) vary with the total storage capacity in the system. Then, we adopt a simple  yet efficient $k$-means clustering approach to direct observing the group dynamics of MCI.

\subsection{Prototype System Setup}
We use the residential load data from Pecan Street \cite{pecanstreet}, collected from May 1 to August 9, 2015, with resolution of 1 hour.  
\par We consider the MCI dynamics in the three tier prototype system (also, this is a pool model). This prototype corresponds to the ToU pricing scheme in practice, with the off peak period (hour 0-8), peak period (hour 9-12), and partial peak (hour 12-23). We want to emphasize that there are key differences between our prototype three tier system and ToU price: the prices in our system are determined by the market conditions and will be affected by the total load in real time, whereas the ToU scheme  often offers fixed rates for the three periods.
\par The total loads in the three periods are respectively $4$MWh, $12$MWh, and $6$MWh. We assume the cost function is simply $C(g_t)=g_t^2$. This allows us to characterize the price dynamics as $E$ grows. Figure \ref{fig 2} plots the sample prices for four values of $E$: $0$MWh, $15$MWh, $30$MWh and $45$MWh. {The prices at peak, off peak and partial peak hours are respectively tagged as $p_L$, $p_H$ and $p_M$}. As expected, when $E$ is sufficiently large (in our case, $45$MWh), the prices over all the periods become the same.
\begin{figure}[htbp]
        \centering  
        \includegraphics[width=0.95\linewidth]{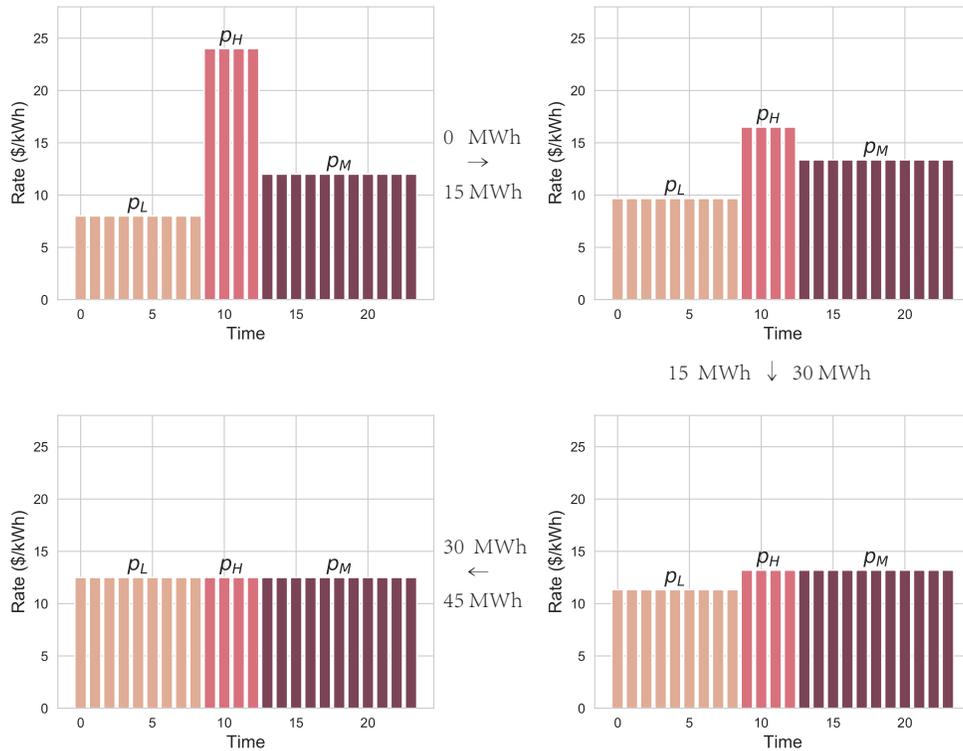}
        \caption{{Evolution of ToU Prices when Storage Capacity Goes from $0$MWh to $45$MWh. }}
        \label{fig 2}
        \vspace{-0.1cm}
\end{figure}
\begin{figure}[htbp]
        \centering  
        \includegraphics[width=0.95\linewidth]{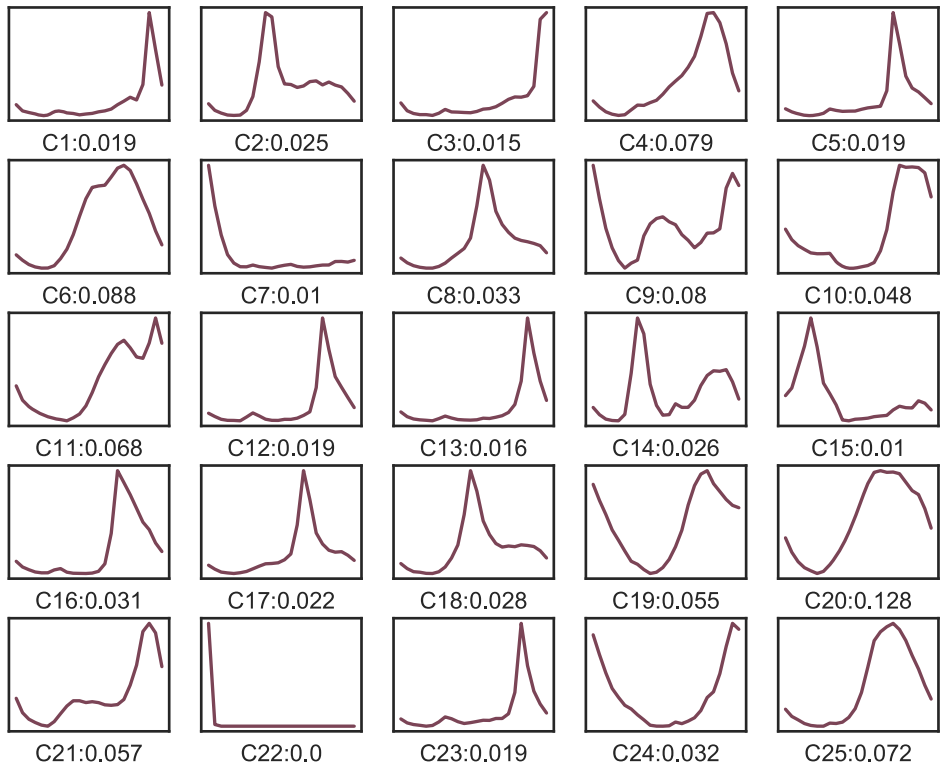}
        \caption{Clustered User Load Types: (C$X:q$) represents the proportion $q$ for cluster $X$.}
        \vspace{-0.1cm}
        \label{fig 1}
\end{figure}
\begin{figure}[htbp]
        \centering  
        \vspace{-0.15cm}
        \includegraphics[width=0.95\linewidth]{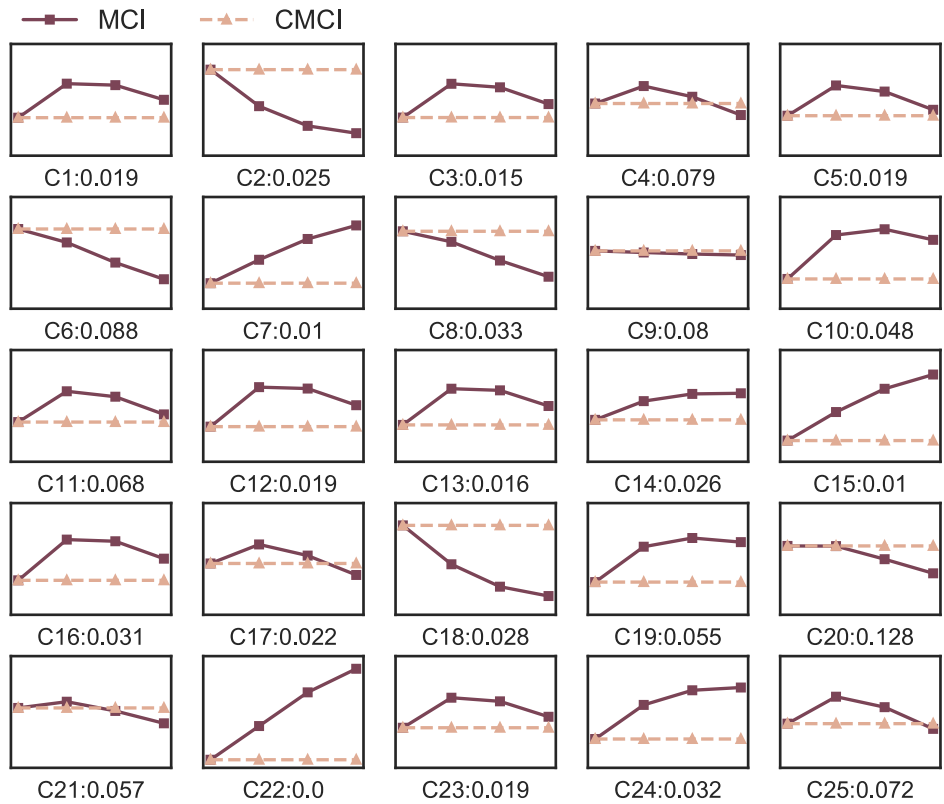}
        \caption{MCI Dynamics of Clusters (Representative Users).}
        \vspace{-0.2cm}
        \label{fig 3}
\end{figure}


\subsection{MCI Dynamics for Representative Users}
We adopt the classical $k$-means clustering method to select representative users, and set $k$ to be $25$. The clustering is based on user's normalized load profile:
\begin{align}
  \mathbf{l}_{i}=\left\{\frac{l_{i}^1}{\Vert \mathbf{L}_{i}\Vert},...,\frac{l_{i}^T}{\Vert \mathbf{L}_{i}\Vert}\right\}.
\end{align}

The clustering result is shown in Figure \ref{fig 1}. Based on this result we show the trend of MCI (decomposed as CMCI, yellow dash lines and CMCI+VMCI, red solid line) of each representative user in Figure \ref{fig 3}. {Since the CMCI is constant, it can be seen as a baseline, reflecting the variance of VMCI }. One direct conclusion is that CMCI is just the MCI when $E=0$, and VMCI can be regarded as the deviation from CMCI when $E$ grows. It can be seen that the users’ electricity consumption behaviors are quite heterogeneous. For example, type C7 users tend to consume
electricity at midnight, which result in low MCI, because the low
electricity price at midnight. On the contrary, C2 users concentrate
their consumption in the forenoon, when the price is high. Clearly,
the heterogeneity of their MCI comes from the volatility of prices. While the storage system smoothes the prices across time, its impact on individuals diverges. For instance, C$6$'s VMCI is monotonically decreasing whist C$7$'s VMCI is monotonically increasing. However, for some types of users, such as C$17$, their VMCI's increase at first and then decrease to a lower level compared with their CMCI.

\subsection{MCI Group Dynamics}
To capture the MCI group dynamics, we can sure start from the load profile based clustering result. However, it turns out that there exists a much easier algorithm. The key is to identify that it suffices to conduct the $k$-means clustering for a single metric MCI to understand its group dynamics. The $k$-means clustering based on single metric can be implemented by a greedy yet effective algorithm (the \textit{greedy $k$-means clustering}, proposed in \cite{cui2019robust}). The idea is simply to first sort the MCI's and then greedily cluster users within some prefixed radius. We repeat the algorithm in Algorithm \ref{Alg 1}.
\renewcommand{\algorithmicrequire}{\textbf{Input:}} 
\renewcommand{\algorithmicensure}{\textbf{Output:}} 
\begin{algorithm} \caption{Greedy $k$-means Clustering \cite{cui2019robust}}
\label{Alg 1}
  \begin{algorithmic}[1]
  \Require Tuple of users' MCI ($i$,MCI$_i$), $i=1,2,...,O$; Radius $r$
  \Ensure  Clusters C$1$,...,C$\kappa$
  \State Sort ($i$,MCI$_i$) by ascending order of MCI$_i$
  \State $i\leftarrow 1$, $k\leftarrow 1$  
  \Repeat
  \State $j\leftarrow \arg\max_j\{\text{MCI}_j\le \text{MCI}_i+r\}$
  \State C$_k\leftarrow \{i,...,j\}$
  \State $k\leftarrow k+1$ 
  \State $i\leftarrow j+1$
  \Until {$i>n$}
  \State $\kappa\leftarrow k$\\
  \Return Clusters C$1$,...,C$\kappa$
  \end{algorithmic}
\end{algorithm}

\begin{figure}[t]
        \centering  
        \includegraphics[width=0.84\linewidth]{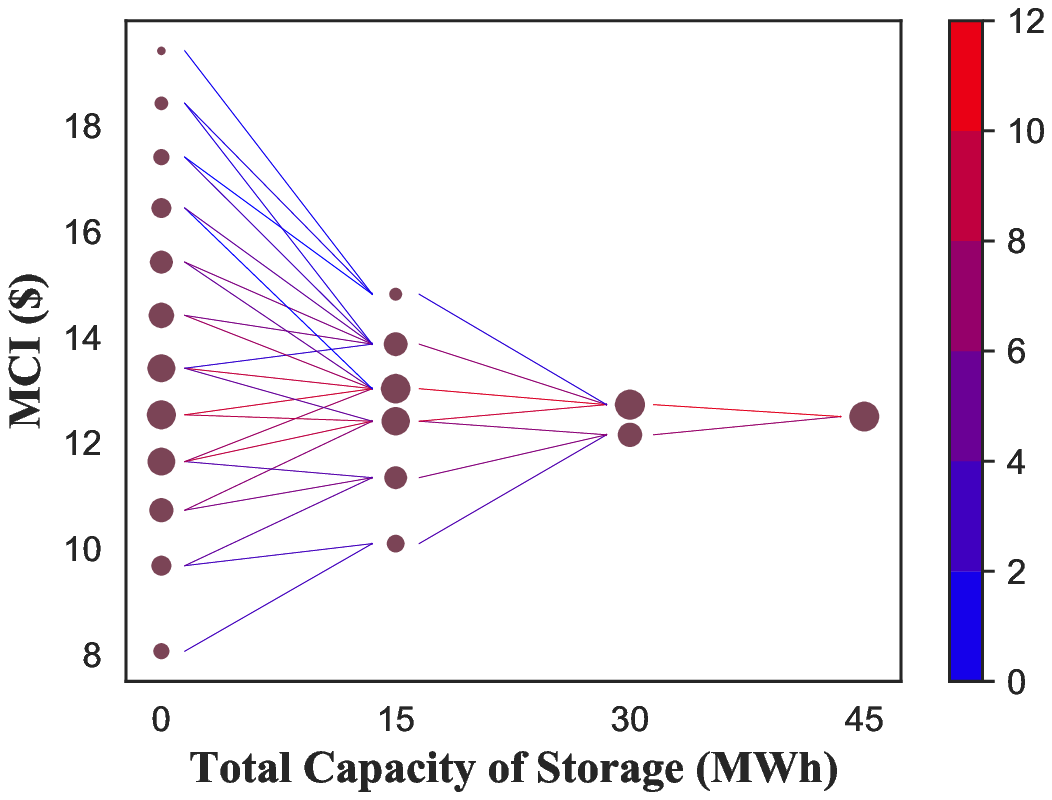}
        \vspace{-0.15cm}
        \caption{MCI Group Dynamics.}
        \vspace{-0.2cm}
        \label{fig GkC}
\end{figure}
\par This algorithm is effective as it achieves the optimal $k$-means clustering, yet with the time complexity of $O(n\log n)$. Figure \ref{fig GkC} visualizes the group dynamics of MCI: the radius of each circle indicates the number of users in the corresponding cluster and the ties characterize cluster flow dynamics. It is clear that as $E$ grows, the number of clusters decreases dramatically, and the upper and lower bounds of MCI in the system also converge very fast.

\section{Numerical Studies}
\label{sec6}
In this section, to support our theoretical results for the network constrained model, we conduct numerical studies in two systems. We first consider a 3-bus prototype system to highlight the convergence feature of MCI. Then we turn to the more realistic case: IEEE 39-bus system \cite{zimmerman2010matpower}. We find our theoretic result is valid in both cases. 


\begin{figure}[h!b]
    \centering
        \subfloat[3-bus System.]{
      \includegraphics[width=0.45\linewidth]{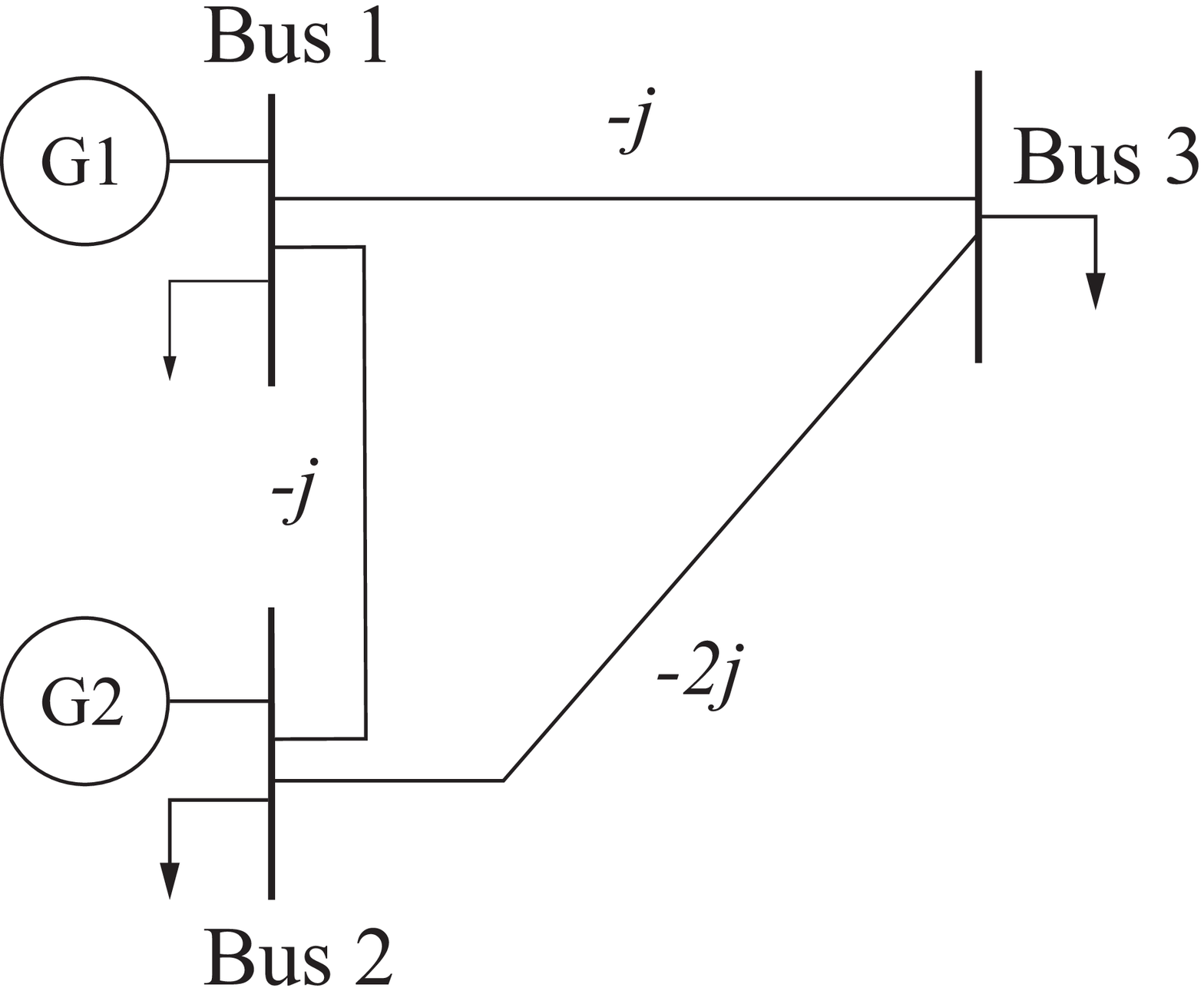}}
    \label{fig 6a}\hfill
	  \subfloat[Load Pattern.]{
        \includegraphics[width=0.48\linewidth]{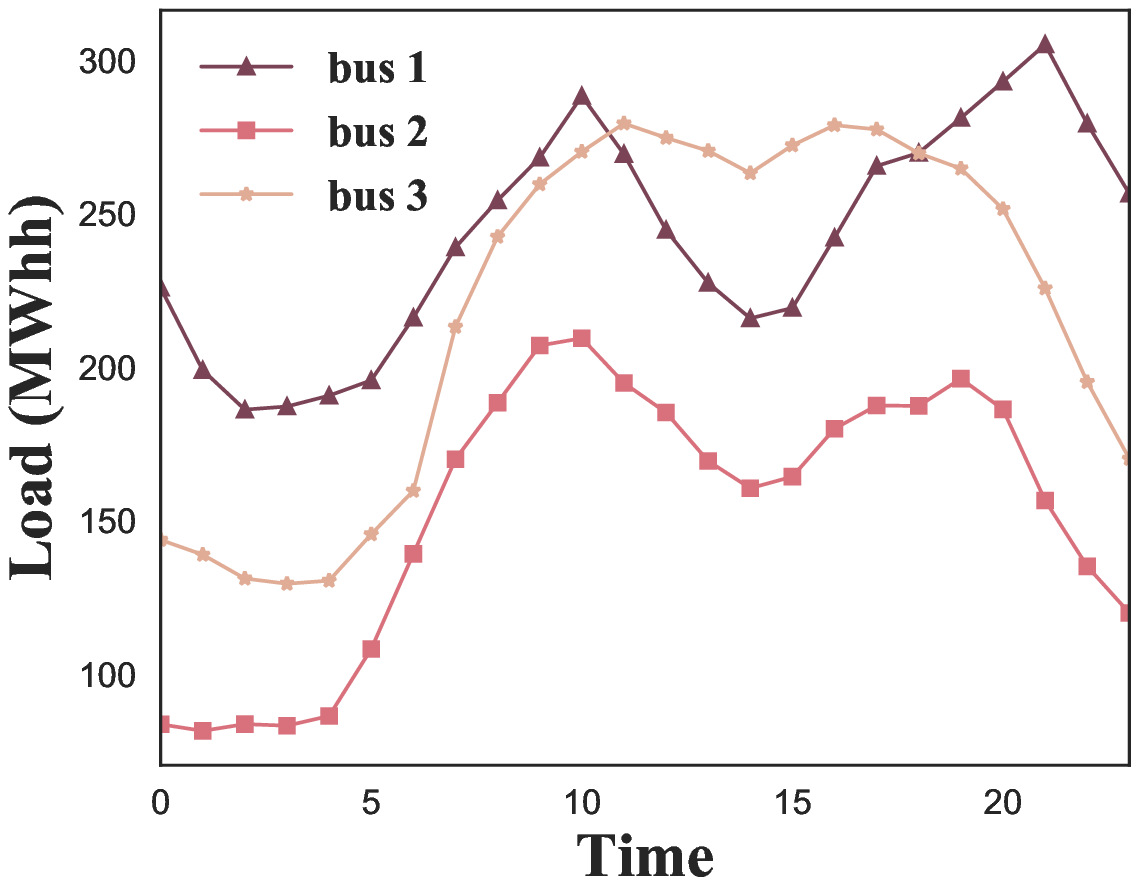}}
    \label{fig 6b}
	  \caption{Network and Load for 3-bus Prototype System.}
	  \label{fig 6} 
     \centering
	  \subfloat[Cost and Generation.]{
      \includegraphics[width=0.52\linewidth]{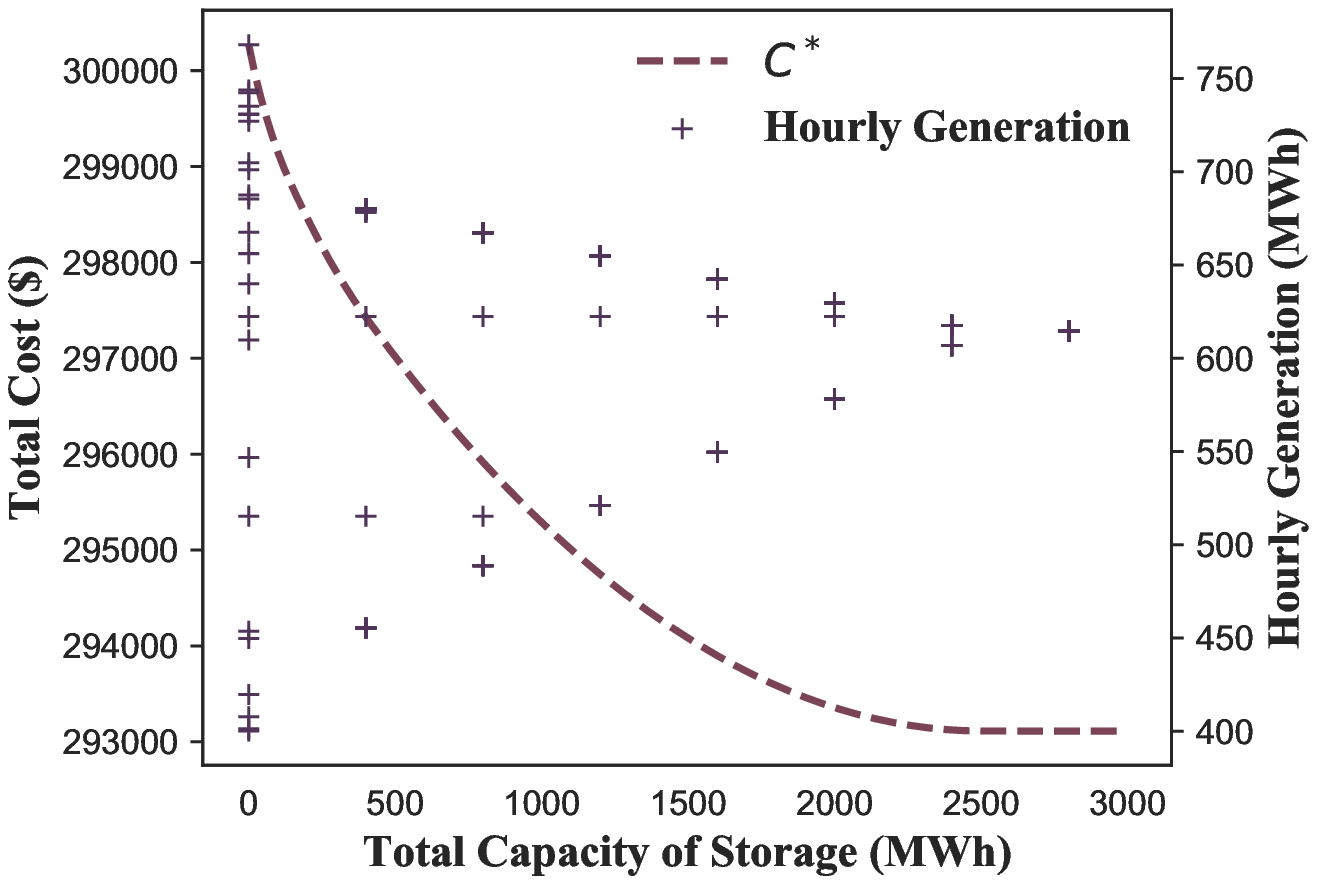}}
    \label{fig 7a}\hfill
	  \subfloat[MCI and Upper/Lower Bounds.]{
        \includegraphics[width=0.44\linewidth]{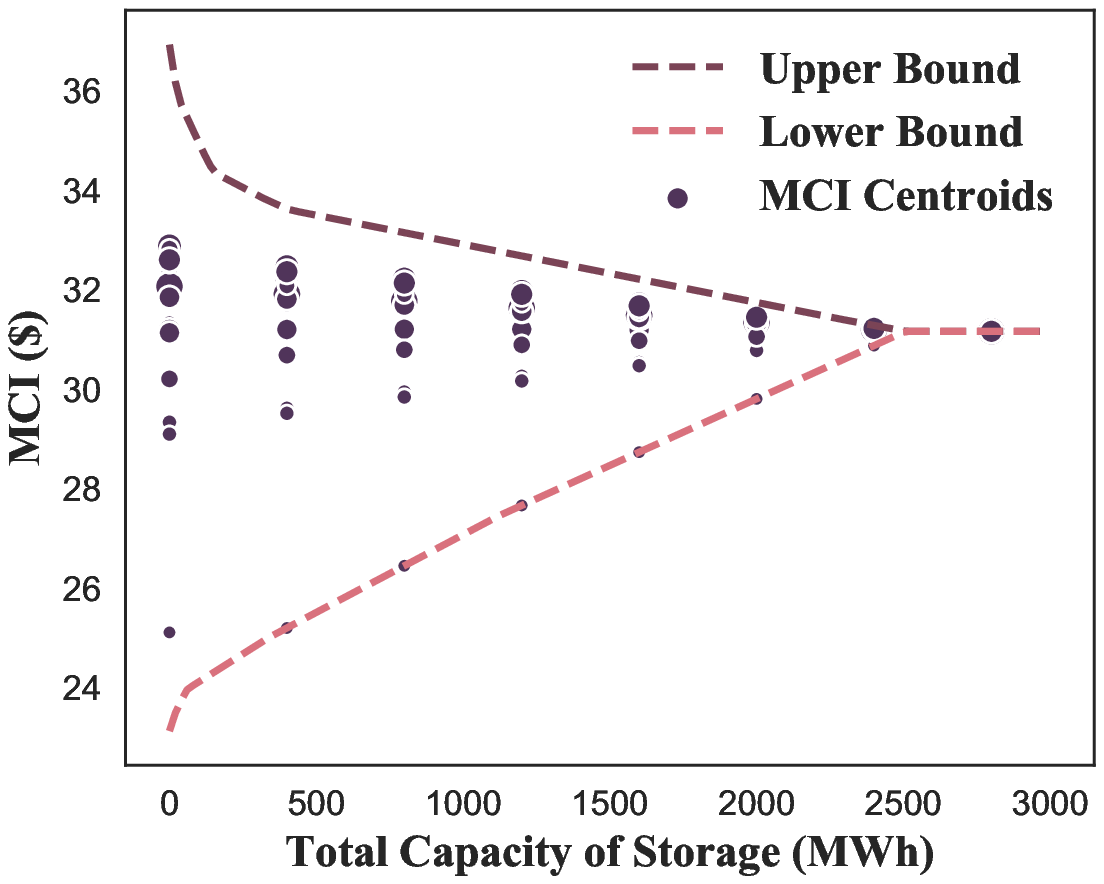}}
    \label{fig 7b}\hfill
	  \caption{Results for 3-bus Pool Model.}
	  \label{fig 7} 
	  \vspace{-0.3cm}
\end{figure}

\subsection{3-bus Prototype System}
We illustrate our results using a prototype 3-bus system. The network and the system load profiles are shown in Figure \ref{fig 6}. Both bus 1 and 2 have one generator and bus 3 is a pure load bus. The susceptance for each transmission line is shown in Figure \ref{fig 6}(a). We assume the generation cost functions are
\begin{equation}
\begin{aligned}
    C_{1}(g_{1,t})=0.05 g_{1,t}^2+5g_{1,t}+100,\ \forall t,\\
    C_{2}(g_{2,t})=0.03 g_{2,t}^2+10g_{2,t}+120,\ \forall t.
\end{aligned}
\end{equation}
The transmission line capacities are $f^{\text{max}}_{12}=f^{\text{max}}_{21}=80$MW, $f^{\text{max}}_{13}=f^{\text{max}}_{31}=130$MW and $f^{\text{max}}_{23}=f^{\text{max}}_{32}=150$MW. 

\par We verify both our results on electricity pool model and those on the network constrained model. Figure \ref{fig 7} plots the convergent dynamics in the 3-bus system ignoring all the network constraints. Clearly, the total cost function is decreasing and convex in $E$, while the hourly generations of the two generators and MCI (as well as the upper bound and lower bound) converges as $E$ grows. It is interesting to observe in Figure \ref{fig 7}(b) that the initial smoothing effect of storage system is rather strong. This indicates that the smoothing effect is mostly significant when the system most urgently needs flexibility.


\begin{figure*}[htbp]
    \centering
	  \subfloat[Bus 1.]{
      \includegraphics[width=0.32\linewidth]{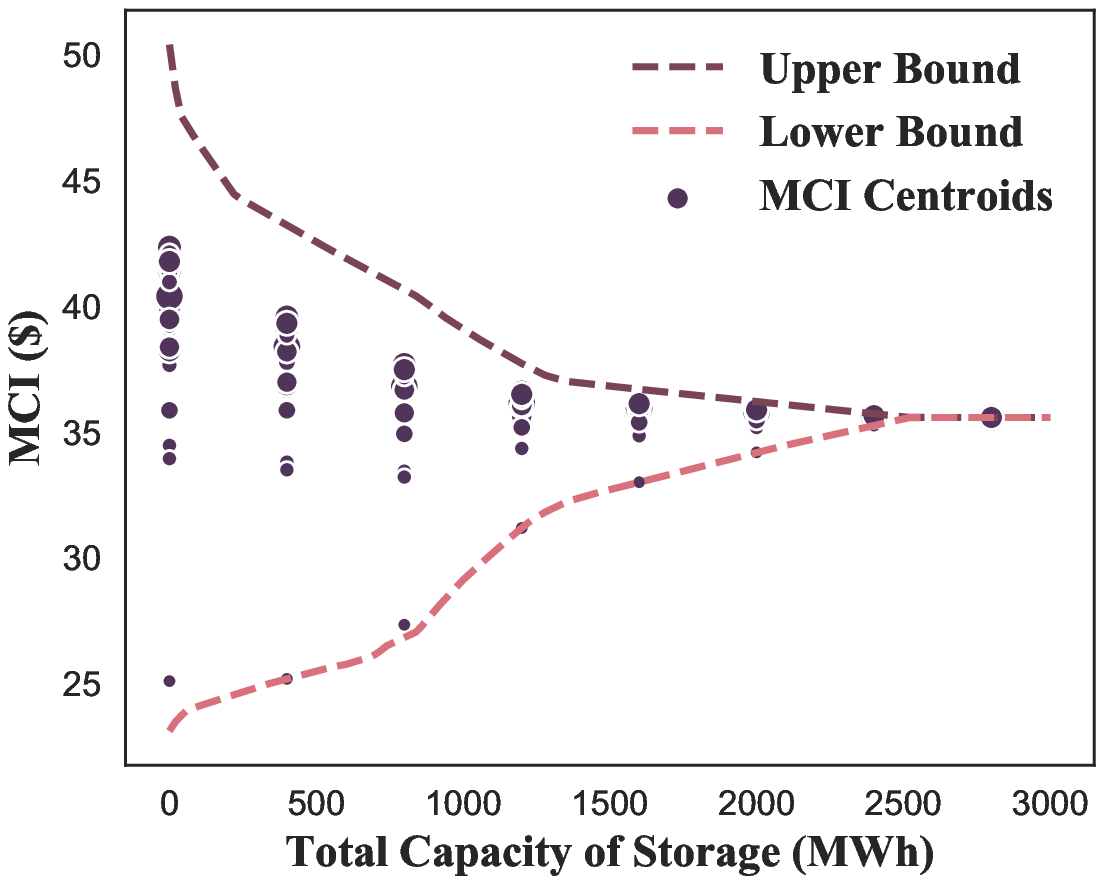}}
    \label{8a}\hfill
	  \subfloat[Bus 2.]{
        \includegraphics[width=0.32\linewidth]{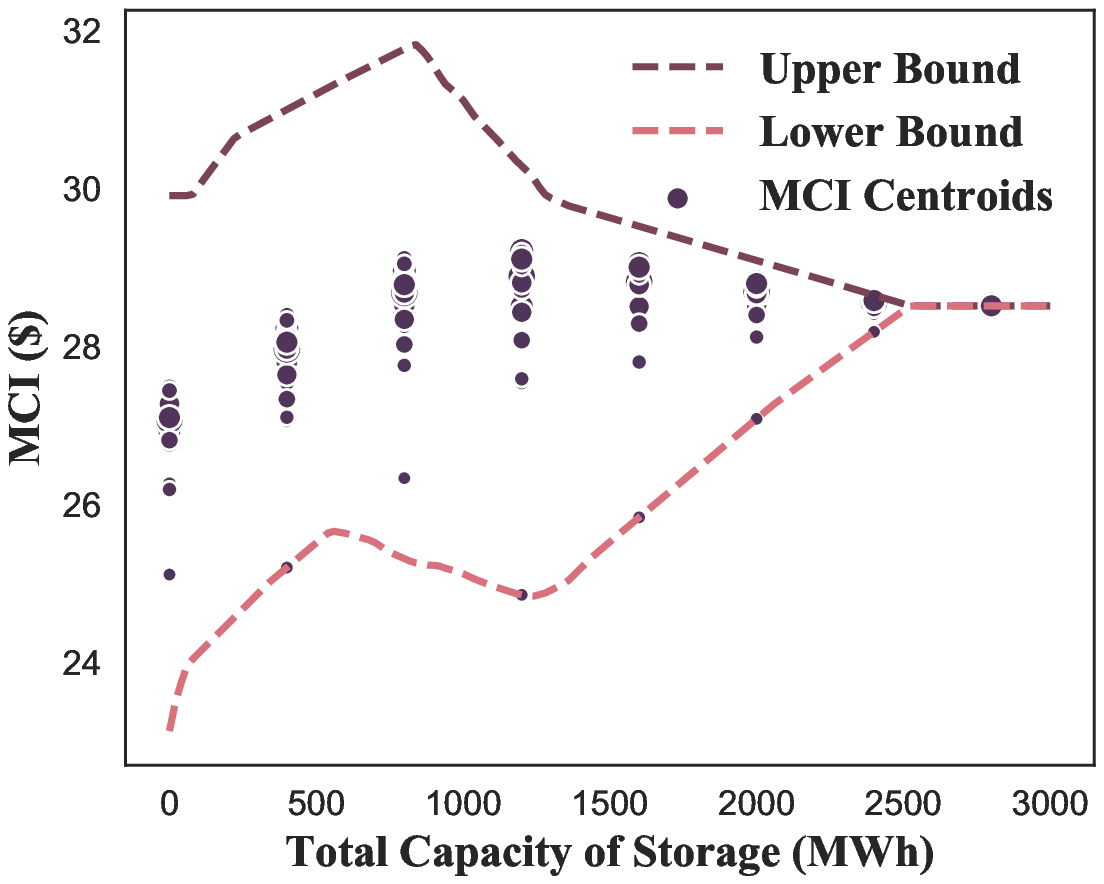}}
    \label{8b}
	  \subfloat[Bus 3.]{
        \includegraphics[width=0.32\linewidth]{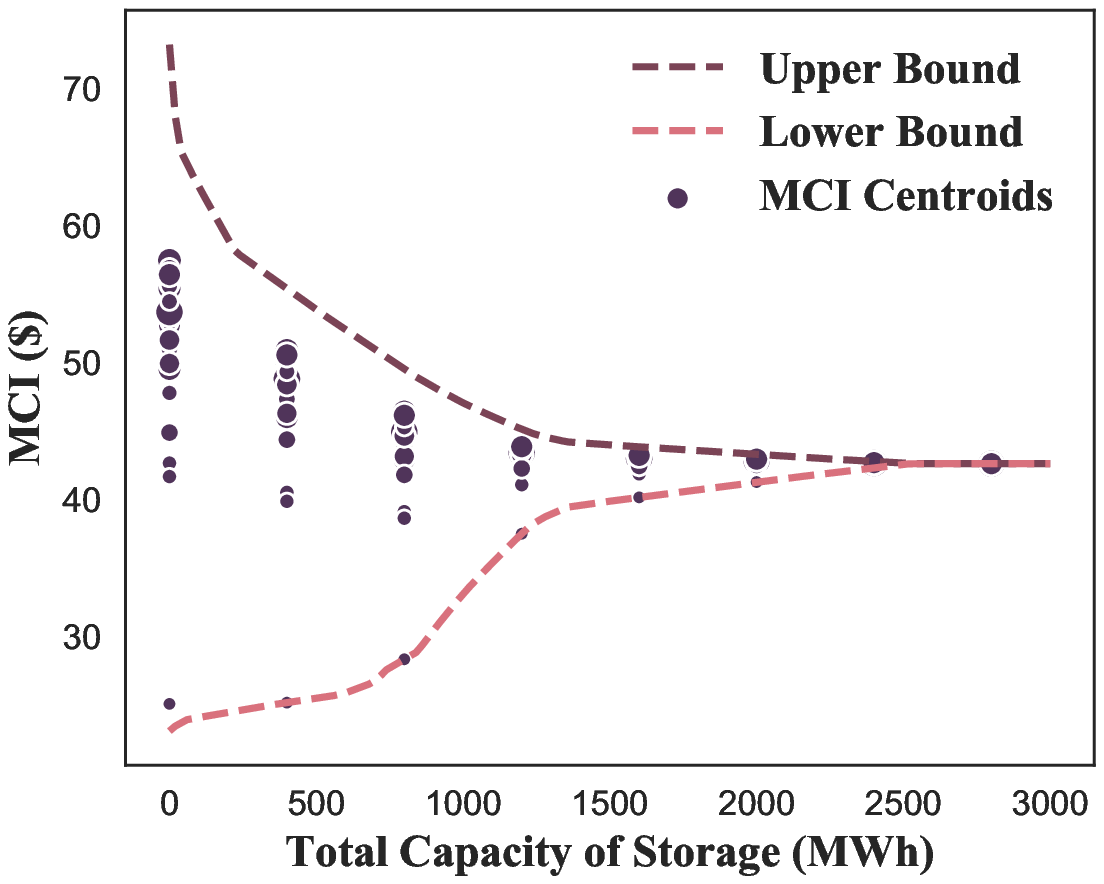}}
    \label{8c}\hfill
	  \caption{ MCI and Upper/Lower Bounds v.s. Total Storage Capacity (3-bus Network Constrained Model).}
	  \label{fig 8} 
 \end{figure*}

\begin{figure}[htbp]
\vspace{-0.2cm}
    \centering
	  \subfloat[Cost.]{
      \includegraphics[width=0.48\linewidth]{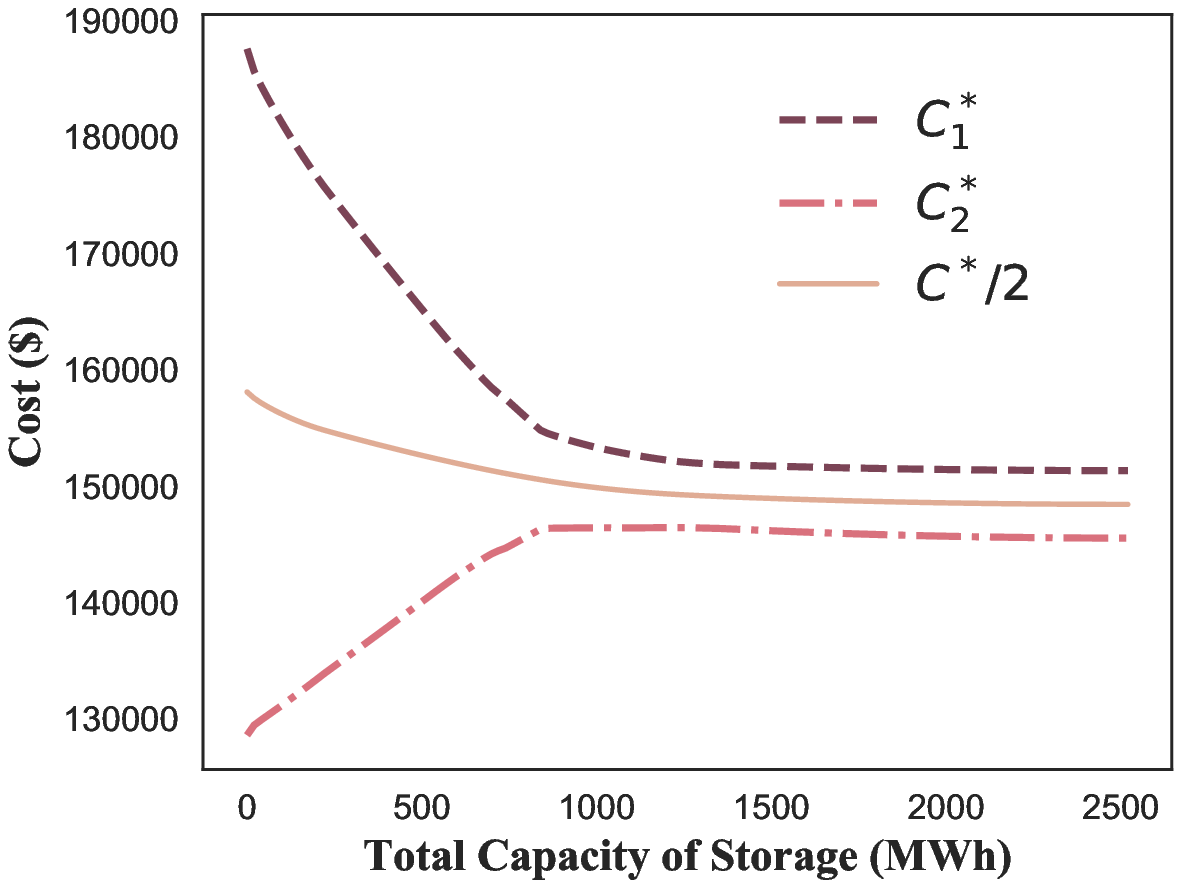}}
    \label{fig 9a}\hfill
	  \subfloat[Locational Storage Capacity.]{
        \includegraphics[width=0.465\linewidth]{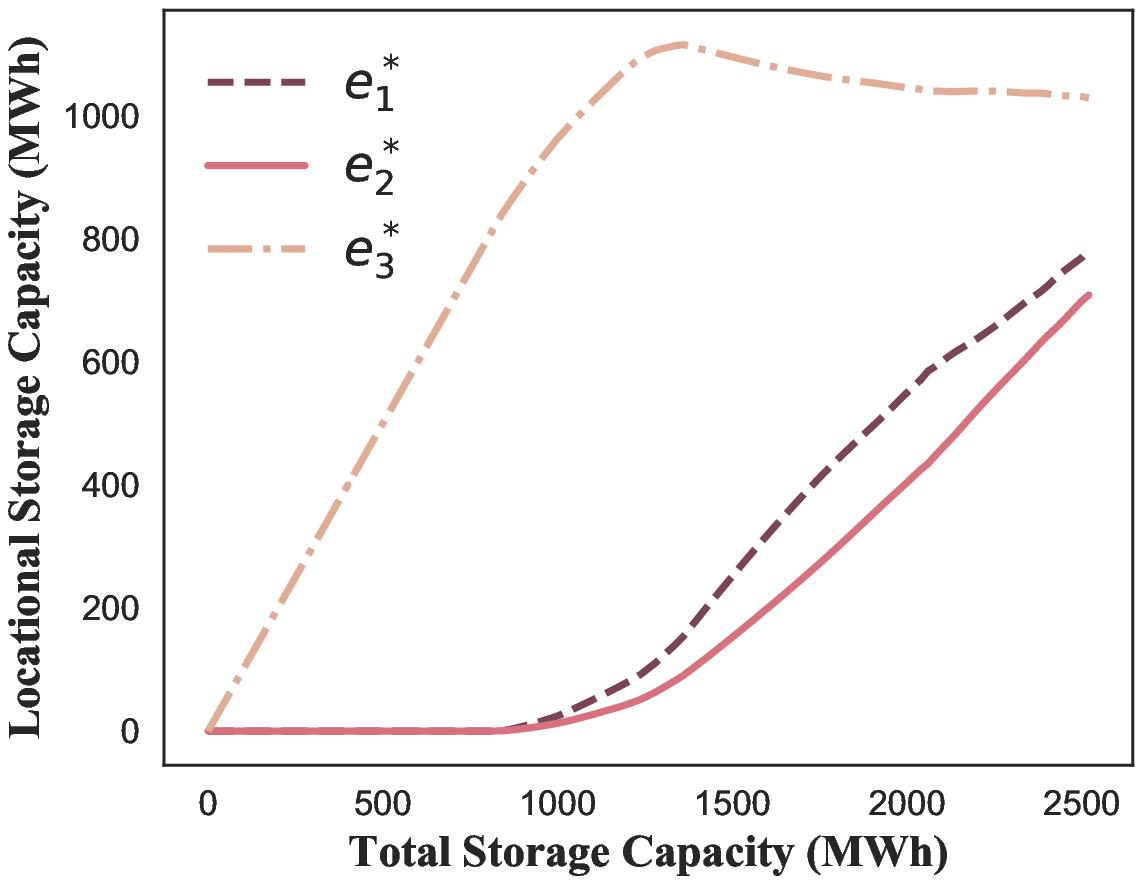}}
    \label{fig 9b}\hfill
	  \caption{Cost and Locational Storage Capacity v.s. Total Storage Capacity (3-bus Network Constrained Model).}
	  \label{fig 9} 
\end{figure}
\par With the network constraints, the convergent characteristics for MCI at each bus are visualized in Figure \ref{fig 8}. Apparently, the MCI and corresponding upper and lower bounds at each bus finally converge, but not monotonically. The result has such implication: with total storage capacity growing, the peak/off-peak generation at one bus may become larger/lower. This counter-intuitive change may help other buses to lower their costs, and ultimately leads to a lower total generation cost. Although the total cost drops, the LMP mostly relies on the cost of the local generation, which yields a even larger/lower bound for MCI.

\par Figure \ref{fig 9} depicts the cost and locational storage capacity's trends in $E$. The result shows again that storage has diverse impacts to different buses. The total generation cost is still convex and decreasing in total storage capacity. However, from a separate view, generation for some bus even increases. It's not surprising that not all the locational storage capacities are monotonically increasing in $E$. Though a larger storage capacity only has direct influence on ISO's storage control, the change in control actions will force flows between buses to change, which in turn retroacts storage control. Hence the optimal locational storage sizing may not exhibit convexity.   

\begin{figure}[htbp]
    \centering
	  \subfloat[Cluster 4.]{
      \includegraphics[width=0.48\linewidth]{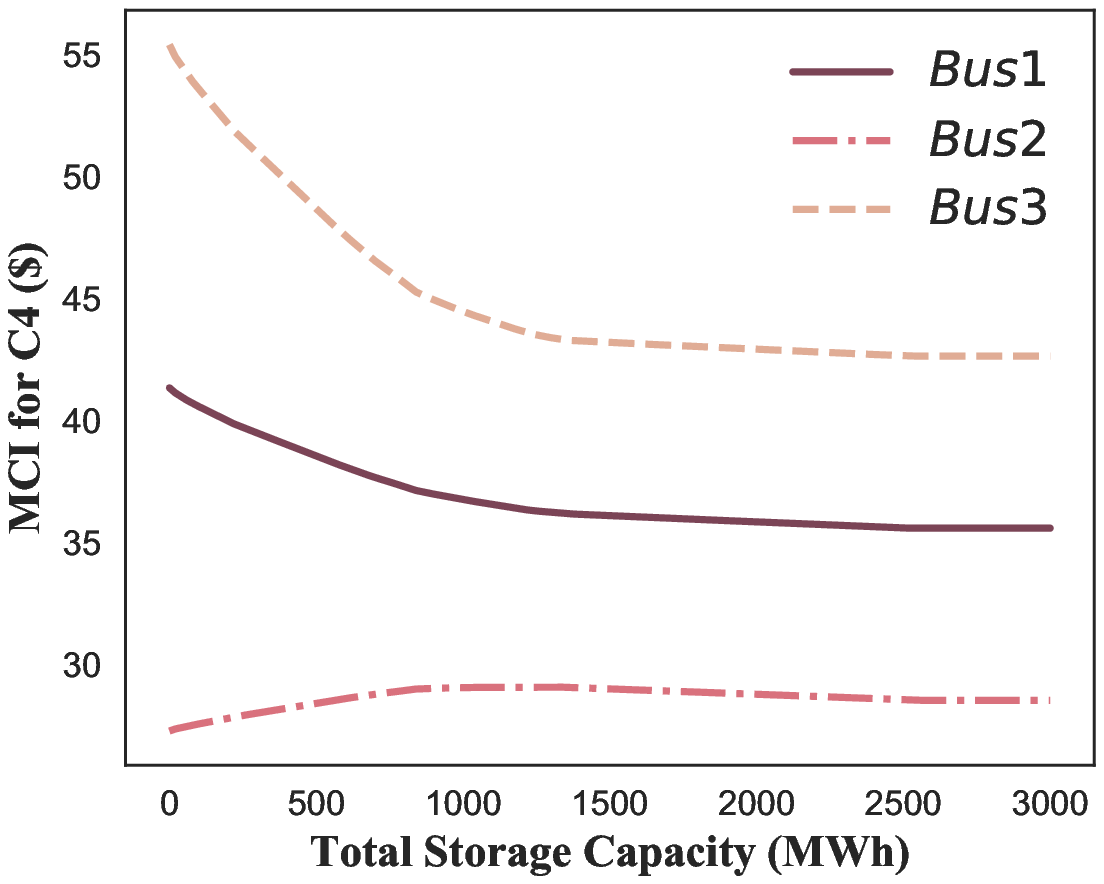}}
    \label{10a}
	  \subfloat[Cluster 8.]{
        \includegraphics[width=0.48\linewidth]{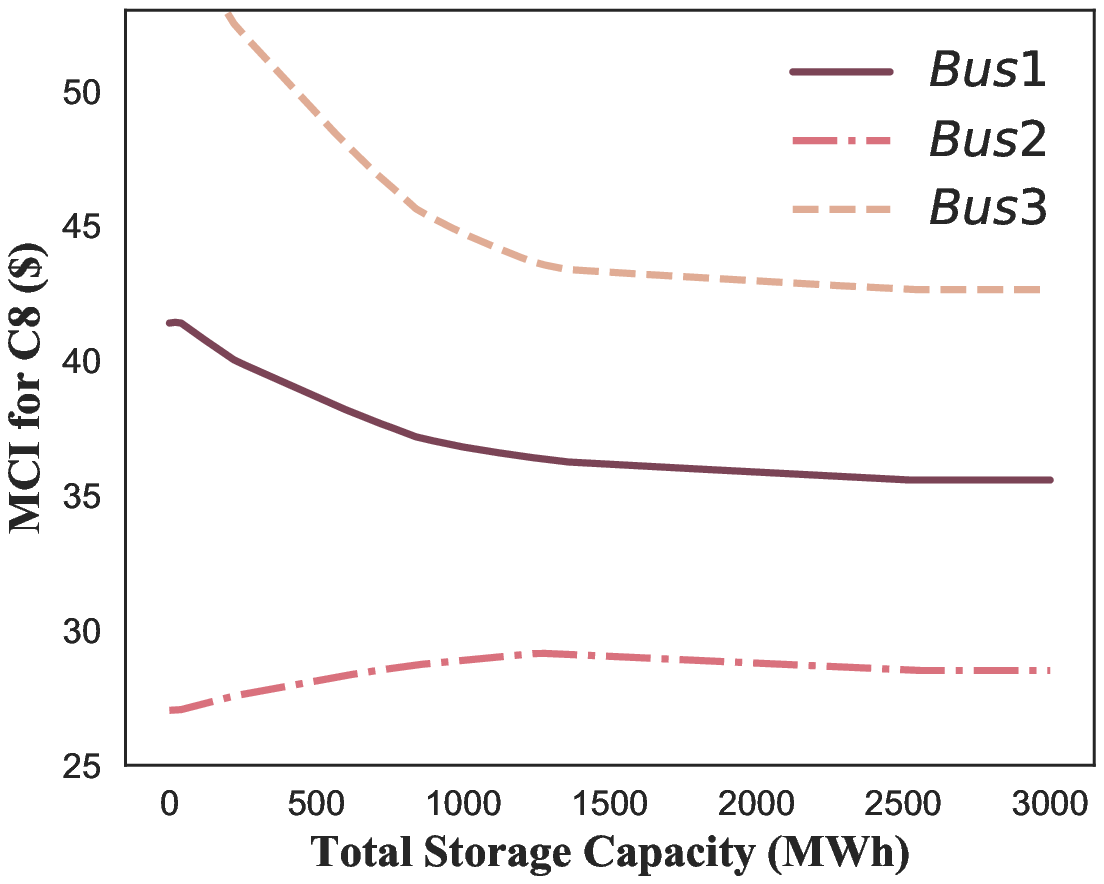}}
    \label{10b}
    
	  \subfloat[Cluster 15.]{
        \includegraphics[width=0.48\linewidth]{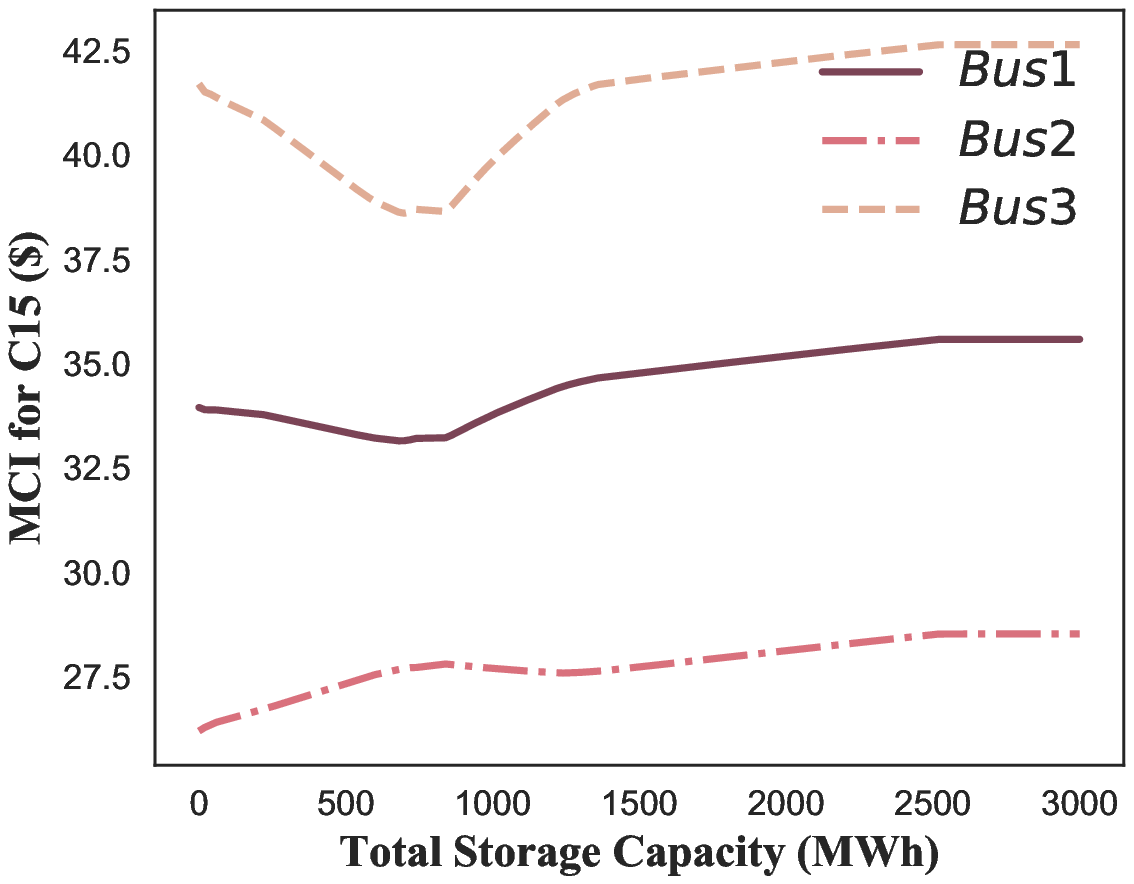}}
    \label{10c}
      \subfloat[Cluster 22.]{
      \includegraphics[width=0.48\linewidth]{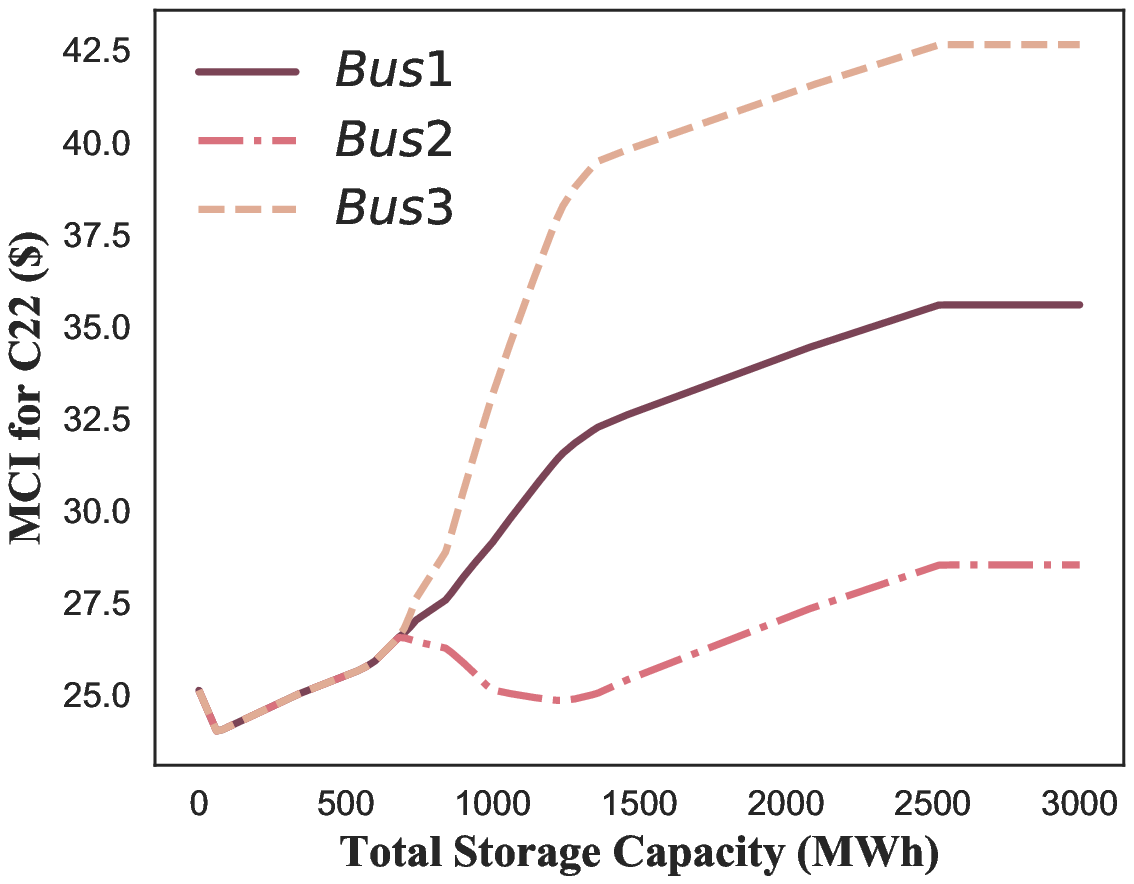}}
    \label{10d}
	  \caption{MCI for Different User Profiles (3-bus Network Constrained Model).}
	  \label{fig 10} 
\end{figure}  

\par We also highlight that a user's load profile can exhibit different MCI among buses. Figure \ref{fig 10} shows the MCI's trends for cluster C$4$, C$8$, C$15$ and C$22$. From this figure, we can see that different user profiles exhibit different patterns among the buses. Nevertheless, they all converge, as Proposition \ref{pro4.7} indicates. It's notable that for C$22$, the MCI at each bus is homogeneous when storage capacity is low. This is because such kind of users focus their power consumption when congestion doesn't occur. With storage capacity increasing, the temporal generation changes. As a result, congestion happens. The result shows that storage sometimes may not help to mitigate congestion, on the contrary, congestion conditions may be exacerbated. 

\subsection{IEEE 39-bus System}
\par In order to obtain more convincing results, we conduct the analysis on the IEEE 39-bus test system \cite{zimmerman2010matpower}. This system contains 10 generation buses. Since only single-period load is provided, we generate multi-stage load profiles by properly scaling the load. The load patterns are from the European Network of Transmission System Operators for Electricity (ENTSO-E) data \cite{entsoe}. To highlight the influences of complicated network, we only show the general case with transmission congestion in the 39-bus system.
\begin{figure}[t]
        \centering  
        \includegraphics[width=1\linewidth]{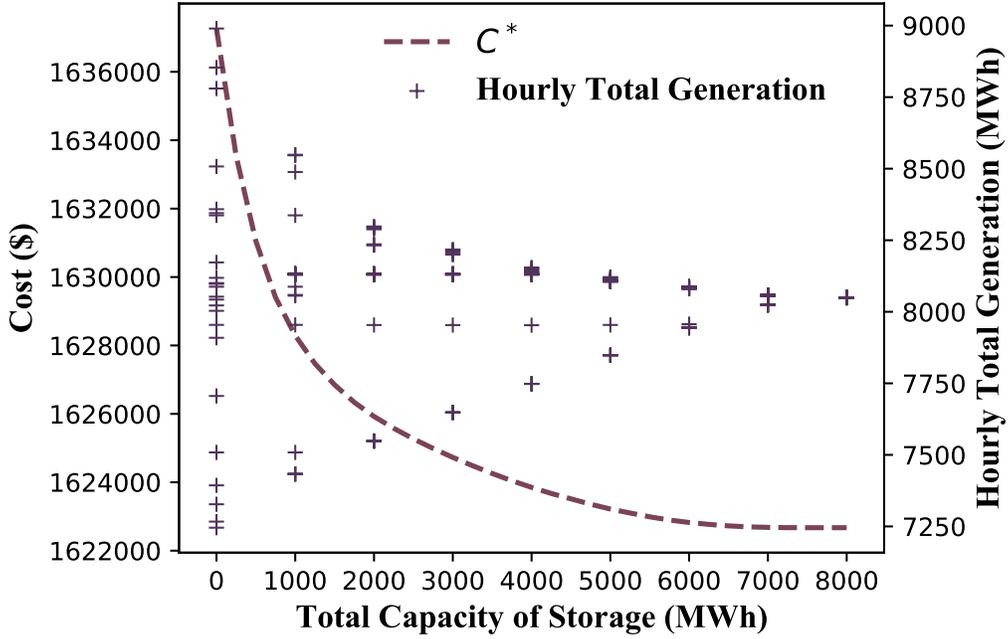}
        \caption{Total Cost and Generation v.s. Total Storage Capacity (39-bus System).}
        \label{fig 11}
\end{figure}
\begin{figure}[t]
        \centering  
        \includegraphics[width=1\linewidth]{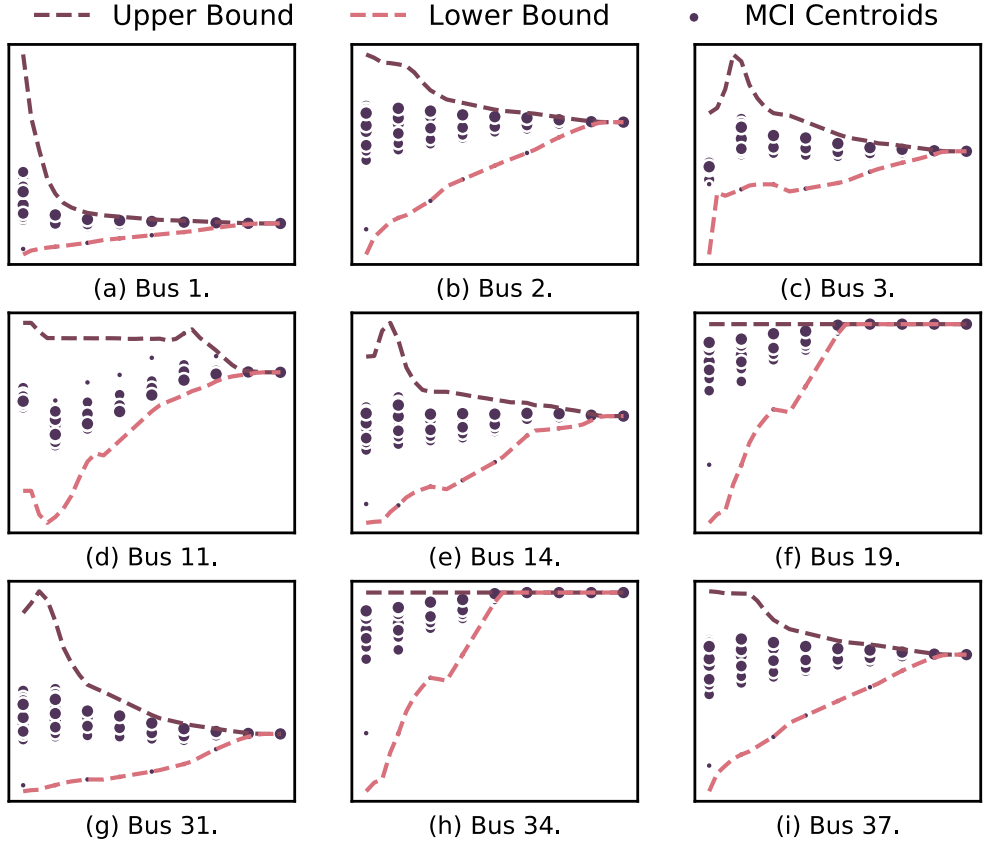}
        \caption{MCI and Upper/Lower Bounds (39-bus System).}
        \label{fig 12}
\end{figure}

\par We first show how total cost and total hourly generation change with respect to total storage capacity in Figure \ref{fig 11}. It is not surprising that the total cost is convex and decreasing in $E$. The hourly generations also converge when storage capacity increases. Note the decline of cost and convergence of generation are synchronized, i.e., when the cost curve becomes flat, the generation finally converges. 
\par The convergent characteristics of selected buses are shown in Figure \ref{fig 12}. We select nine typical buses with distinguished features. The former six cases are load buses and the latter three are generation buses. This figure exhibits variant convergent characteristics of the MCI. There is no evident difference between generation buses and load buses. We find an interesting phenomena when examining the upper and lower bounds: for some buses such as bus 19 and bus 34, the speed of convergence is faster then others. This phenomenon may come from the low variance of the demands at such buses. Also, some buses exhibit similar convergent dynamics, such as bus 2 and bus 37. This observation is due to low congestion between them. Such buses can be seen as a micro-grid as a whole.

\section{Conclusion Remarks}
\label{sec7}
In this paper, we investigate the impacts of storage as public asset to the electricity sector from two perspectives: social and individual. We prove that the storage system improves the social welfare. However, it does not benefit every end user. To examine individuals' welfare, we extend the notion of MCI as an index. We study the dynamics of MCI through $k$-means clustering and bound characterization, which exhibits valuable information of storage's value as public asset.
\par This paper can be extended in many interesting directions. For instance, as we observed in the numerical studies, when considering transmission congestion, the bounds at each bus and installed capacity are not monotone. Such observation needs to be explained by further theoretic analysis. {Since we assume that the generation cost is quadratic, it will be interesting to extend our results to more forms of cost functions. In addition, while storage may not necessarily benefit every user as public asset, the traditional pricing scheme may fail to reflect the marginal utility and individual rationality of all users. It is hence promising to design better pricing scheme from the cooperative game perspective to address the issue. Furthermore, it is valuable to combine the storage investment and the storage operator process as a whole. The major obstacle is to design an effective and fair cost allocation rule across the system.}

\bibliographystyle{unsrtnat}

\bibliography{elsarticle-template-num}
\appendix
\section{Proof of Proposition 4.2}
\label{appen}
The monotonically decreasing feature is due to the the fact that increase of $E$ expands the feasible region, which induces no worse total cost. To prove the convexity, we consider two arbitrary value of $E$: 
  $$
  0\le E_1<E_2.
  $$ 
  Then we have $g_{n,t}^*(E_1)$, $u_{n,t}^*(E_1)$, $f_{nm}^*(E_1)$, $e_n^*(E_1)$ and $g_{n,t}^*(E_2)$, $u_{n,t}^*(E_2)$, $f_{nm}^*(E_2)$, $e_n^*(E_2)$ are the corresponding optimal solutions to problem (P1) when $E=E_1$ and $E=E_2$. For any $E^\prime=\beta E_1+(1-\beta) E_2$, where $0\le \beta\le 1$, we can show that 
  \begin{align}
    g_{n,t}^\prime=\beta g_{n,t}^*(E_1)+(1-\beta)g_{n,t}^*(E_2),\\
    u_{n,t}^\prime=\beta u_{n,t}^*(E_1)+(1-\beta)u_{n,t}^*(E_2),\\
    f_{nm}^\prime=\beta f_{nm}^*(E_1)+(1-\beta)f_{nm}^*(E_2),\\
    e_{n}^\prime=\beta e_{n}^*(E_1)+(1-\beta)e_{n}^*(E_2)
  \end{align}
  construct a feasible solution to (P1). Note this is not necessarily the optimal solution for (P1) when $E=E^\prime$. Thus we can show that
  \begin{equation}
    \begin{aligned}
      C^*(E^\prime)&\le \sum_{n\in\mathcal{N}}\sum_{t=1}^TC(g_{n,t}^\prime)\\
      &= \sum_{n\in\mathcal{N}}\sum_{t=1}^T\bigg[\frac{1}{2}a(\beta g_{n,t}^*(E_1)+(1-\beta)g_{n,t}^*(E_2))^2\\
      &\quad +b(\beta g_{n,t}^*(E_1)+(1-\beta)g_{n,t}^*(E_2))+c\bigg]\\
      &\le \sum_{n\in\mathcal{N}}\sum_{t=1}^T \bigg[\frac{1}{2}a \beta (g_{n,t}^*(E_1)^2)+\frac{1}{2}a(1-\beta) (g_{n,t}^*(E_2)^2)\\
      &\quad + b (\beta g_{n,t}^*(E_1)+(1-\beta)g_{n,t}^*(E_2))+c\bigg]\\ 
      &= \beta C^*(E_1)+(1-\beta)C^*(E_2).
    \end{aligned}
  \end{equation} 
  The second inequality holds because of the convexity of quadratic cost function. This concludes our proof.  

\section{Proof of Proposition 4.5}
In Lemma \ref{lem4.3}, we have proven that 
\begin{align}
  g_t^*(E)=\frac{1}{T}\sum_{t=1}^T d_t=\bar{d}_t,\ \forall t,\ \forall E\ge \tilde{E},
\end{align}
where $\tilde{E}$ is a large number. In this case, the maximal and minimal generation are also $\bar{d}$. Hence,
\begin{align}
\lim_{E\to\infty}\text{UBMCI}(E)=\lim_{E\to\infty}\text{LBMCI}(E)=a\bar{d}+b.
\end{align}
\par Now we prove the monotonicity. We only prove the monotonically decreasing character of UBMCI since the proof for LBMCI follows the same routine. We prove this by contradiction. 
\par Let $\delta>0$ represent an infinitesimal perturbance. We need to verify $g_M^*(E+\delta)\le g_M^*(E)$, where $g_M^*$ is the largest temporal generation. Suppose $g_M^*(E+\delta)>g_M^*(E)$. Denote $\epsilon_t$ as the change of optimal generation at time $t$, given storage capacity changing from $E$ to $E+\delta$, i.e., 
\begin{align}
  \epsilon_t\coloneq g^*_t(E+\delta)-g^*_t(E)=u^*_t(E+\delta)-u^*_t(E).
\end{align}

\noindent Thus $\epsilon_M>0$ and $\sum_{t\neq M}\epsilon_t=-\epsilon_M<0$. Now the total cost is 
\begin{equation}
\begin{aligned}
  C^*(E+\delta)&=\frac{1}{2}a(d_M+u_M^*(E+\delta))^2+b(d_M+u_M^*(E+\delta))\\
  &\quad +\sum_{t\neq M}\frac{1}{2}a(d_t+u_t^*(E+\delta))^2+b(d_t+u_t^*(E+\delta))\\
  &=\frac{1}{2}a(d_M+u_M^*(E)+\epsilon_M)^2\\
  &\quad+\sum_{t\neq M}\frac{1}{2}a(d_M+u_t^*(E)+\epsilon_t)^2 +bd_M+\sum_{t\neq M}bd_t\\
  &=\sum_{t=1}^T[\frac{1}{2}a(d_t+u_t^*(E))^2+b(d_t+u_t^*(E))]\\
  &\quad +\frac{1}{2}a\sum_{t=1}^T\epsilon_t^2+a\sum_{t=1}^T[(d_t+u_t^*(E))\epsilon_t]\\
  &> C^*(E)+a\sum_{t=1}^T[(d_t+u_t^*(E))\epsilon_t]\\
  &\ge  C^*(E)+ a(d_M+u_M^*(E))\epsilon_M\\
  &\quad +a\sum_{t\neq M,\epsilon_t<0}[(d_t+u_t^*(E))\epsilon_t]\\
  &\ge C^*(E)+ a(d_M+u_M^*(E))\left(\epsilon_M+\sum_{t\neq M,\epsilon_t<0}\epsilon_t\right)\\
  &\ge  C^*(E).
\end{aligned}
\end{equation}
This result violates the decreasing character of $C^*(E)$, which estabilishes the contradiction. Hence, $g_M^*(E+\delta)\le g_M^*(E)$. With the continuity property (Lemma \ref{lem4.1}), the proof is completed.

\section{Proof of Proposition \ref{pro4.7}}
  First, we prove the convergence. Suppose $E$ is sufficiently large, storage capacity $e_n$ for each bus will become large enough so that (\ref{eq8e}) is not binding. As a result, $\lambda_{n,t}^*=\mu_{n,t}^*=0$. According to (\ref{eq12d}), we know $\xi_{n,t+1}^*=\xi_{n,t}^*$ for each $n$ and $t$. Namely, $\xi_{n,t}^*$ will be the same for all $t$. Adding (\ref{eq12c}) to (\ref{eq12a}), we obtain $a_ng_{n,t}^*+b_n=\xi_{n,t}^*$. Hence, $a_ng_{n,t}^*+b_n$ are the same for all time $t$. This proves the convergence.
  \par Since $g_{n,t}$ will converge for each $n$ when $E$ grows sufficiently large (denote the threshold as $E_\text{con}$), (P1) is equivalent to the following problem (P4) when $E\ge E_\text{con}$:
\begin{subequations}
\begin{align}
     (P4)\quad  \min  \    &  \sum_{n\in\mathcal{N}}\sum\limits_{t=1}^T C_n(g_{n,t}) \\
                                      s.t.\ & g_{n,1}=g_{n,2}=...=g_{n,T},\ \forall n , \label{eq43b} \\
            & \text{Constraints }(\ref{eq8b})\text{-}(\ref{eq8g}). \notag
\end{align}
\end{subequations}
Relatively summing up (\ref{eq8b}) and (\ref{eq8c}) over all $t$ and dividing them by $T$, we have
  \begin{align}
     \frac{1}{T}\sum_{t=1}^T(g_{n,t}-d_{n,t})=\frac{1}{T}\sum_{t=1}^T\sum_{m\in \mathcal{N}}Y_{nm}(\theta_{n,t}-\theta_{m,t}), \forall n,\label{eq44}\\
     \frac{1}{T}\sum_{t=1}^TY_{nm}(\theta_{n,t}-\theta_{m,t})\le f_{nm}^{\text{max}},\ \forall nm \in \mathcal{V}.
  \end{align}
  Note $u_{n,t}$'s are eliminated because $\sum_{t=1}^Tu_{n,t}=0,\ \forall n.$ Denote 
  \begin{align}
      g_n&\coloneq \frac{1}{T}\sum_{t=1}^Tg_{n,t},\ \forall n,\\ \theta_n&\coloneq\frac{1}{T}\sum_{t=1}^T\theta_{n,t},\ \forall n.
  \end{align} 
  Then we have 
  \begin{align}
  g_{n}-\frac{1}{T}\sum_{t=1}^Td_{n,t}=\sum_{m\in \mathcal{N}}Y_{nm}(\theta_{n}-\theta_{m}),\ \forall n, \\
      Y_{nm}(\theta_{n}-\theta_{m})\le f_{nm}^{\text{max}},\ \forall nm\in\mathcal{V}.
\end{align}
  These are exactly the constraints for (P3). Hence the feasible solutions to (P4) are all feasible to (P3).
  \par Then we prove the other side. Denote 
  \begin{align}
      \hat{u}_{n,t}\coloneq \frac{1}{T}\sum_{t=1}^Td_{n,t}-d_{n,t},\ \forall n.\label{eq49}
  \end{align}
  When $E$ is sufficient large, each $e_n$ can be arbitrarily large, so (\ref{eq49}) is feasible for constraints (\ref{eq8d})-(\ref{eq8g}). Suppose $\hat{g}_n,\ \hat{\theta}_n$ construct a feasible solution to (P3), it's easy to show that
  \begin{align}
      g_{n,t}=\hat{g}_{n},\ 
      \theta_{n,t}=\hat{\theta}_{n},\ 
      u_{n,t}=\hat{u}_{n,t}
  \end{align}
  construct a feasible solution to (P4). That means, all feasible solutions to (P3) are also feasible to (P4).
  \par Now we have shown that (P3) and (P4) have the same feasible domain. Since the objective functions of (P3) and (P4) are equivalent under the constraint (\ref{eq43b}) (they are proportional with a scalar $\frac{1}{T}$), the optimal solution to (P3) is also optimal for (P4). Further, when $E\ge E_\text{con}$, (P4) is equivalent to (P1). Hence, the optimal solution to (P3) is optimal for (P1), when $E$ grows sufficiently large. Q.E.D. 
  
\end{document}